\documentclass[12pt, final]{l4dc2023}


\title[Interval Reachability of Neural Network Controllers]{Interval Reachability of Nonlinear Dynamical Systems with Neural Network Controllers}
\usepackage{times}
\usepackage{multicol} 
\usepackage{mathtools}
\usepackage{version,xspace}
\usepackage{url,doi}

 \newcommand{\setdef}[2]{\{#1
	\; | \; #2\}}

\usepackage{algorithm}
\usepackage{algpseudocode}

\newcommand\oprocendsymbol{\hbox{$\triangle$}}
\newcommand\oprocend{\relax\ifmmode\else\unskip\hfill\fi\oprocendsymbol}

 \sloppy
\usepackage{enumitem}

\DeclareSymbolFont{bbold}{U}{bbold}{m}{n}
\DeclareSymbolFontAlphabet{\mathbbold}{bbold}

\newcommand{\real}{\mathbb{R}}

\newcommand{\seminorm}[1]{{\left\vert\kern-0.25ex\left\vert\kern-0.25ex\left\vert #1
		\right\vert\kern-0.25ex\right\vert\kern-0.25ex\right\vert}}

\newcommand{\semimeasure}[1]{\mu_{\seminorm{\cdot}}\kern-0.5ex\left(#1\right)}

\newcommand{\suchthat}{\;\ifnum\currentgrouptype=16 \middle\fi|\;}
\newcommand{\scirc}{\raise1pt\hbox{$\,\scriptstyle\circ\,$}}

\newcommand{\OF}{\mathsf{F}}
\newcommand{\OG}{\mathsf{G}}
\newcommand{\OH}{\mathsf{H}}

\newcommand{\ON}{\mathsf{N}}




\author{%
 \Name{Saber Jafarpour}\thanks{These authors contributed equally} \Email{saber@gatech.edu}\\
 \addr Georgia Institute of Technology
 \AND
 \Name{Akash Harapanahalli}\footnotemark[1] \Email{aharapan@gatech.edu}\\
 \addr Georgia Institute of Technology
 \AND
 \Name{Samuel Coogan} \Email{sam.coogan@gatech.edu}\\
 \addr Georgia Institute of Technology%
}

\begin{document}

\maketitle

\begin{abstract}%
This paper proposes a computationally efficient framework, based on interval analysis, for rigorous verification of nonlinear continuous-time dynamical systems with neural network controllers. 
Given a neural network, we use an existing verification algorithm to construct inclusion functions for its input-output behavior. 
Inspired by mixed monotone theory, we embed the closed-loop dynamics into a larger system using an inclusion function of the neural network and a decomposition function of the open-loop system.
This embedding provides a scalable approach for safety analysis of the neural control loop while preserving the nonlinear structure of the system. 

We show that one can efficiently compute hyper-rectangular over-approximations of the reachable sets using a single trajectory of the embedding system.  
We design an algorithm to leverage this computational advantage through partitioning strategies, improving our reachable set estimates while balancing its runtime with tunable parameters. 
%
We demonstrate the performance of this algorithm through two case studies. First, we demonstrate this method's strength in complex nonlinear environments. Then, we show that our approach  matches the performance of the state-of-the art verification algorithm for linear discretized systems.  

\end{abstract}

\begin{keywords}%
 Reachability analysis, Neural feedback loop, Verification of neural networks, Safety verification.
\end{keywords}

\section{Introduction}

Neural network components are increasingly deployed as controllers in  safety-critical applications such as self-driving vehicles and robotic systems. For example, they may be  designed based on reinforcement learning algorithms~\citep{TZ-GK-SL-PA:16} or trained to approximate some dynamic optimization-based controllers~\citep{SC-KS-NA-DDL-VK-GJP-MM:18}. 
However, neural networks are known to be vulnerable to small input perturbations~\citep{CZ-WZ-IS-JB-DE-IG-RF:13}; a slight disturbance in their input can lead to a large change in their output. 
In many applications, these learning-based components are interconnected with nonlinear and time-varying systems. Moreover, they are usually trained with no safety and robustness guarantees. Their learned control policies can therefore suffer from a significant degradation in performance when presented with uncertainties not accounted for during training. 
%

%



\paragraph*{Related works.}

There is extensive literature on verification of isolated neural networks. Rigorous verification approaches generally fall into three different categories: (i) reachability-based methods which focus on layer-by-layer estimation of reachable sets using interval bound propagation~\citep{MM-TG-MV:18,SG-etal:19,SW-KP-JW-JY-SJ:18}, activation function relaxation~\citep{HZ-etal:18}, and symbolic interval analysis~\citep{SW-KP-JW-JY-SJ:18}; (ii) optimization-based methods which use linear programming~\citep{EW-ZK:18}, semi-definite programming~\citep{MF-AR-HH-MM-GP:19a}, or search and optimization ~\citep{GK-CB-DLD-KJ-MJK:17} to estimate the input-output behavior of the neural networks; and (iii) probabilistic methods~\citep{JC-ER-ZK:19a,BL-CC-WW-LC:19}. We refer to the survey~\citep{CL-TA-CL-CS-CB-MJK:21} for a review of neural network verification algorithms.
Reachability of nonlinear dynamical systems has been studied using optimization-based methods such as the Hamilton-Jacobi approach~\citep{SB-MC-SH-CJT:17} and the level set approach~\citep{IM-CJT:00}. Recently, several computationally tractable approaches including the ellipsoidal method~\citep{ABK-PV:00}, the zonotope method~\citep{AG:05}, the mixed monotone reachability approach~\citep{SC-MA:15b} have been developed for reachability analysis.


It is well-known that, for a closed-loop system, a direct combination of state-of-the art neural network verification algorithms with the existing reachability analysis toolboxes can lead to overly-conservative estimates of reachable sets~\cite[Section 2.1]{SD-XC-SS:19}. For linear systems with neural network controllers, reachable set over-approximation has been studied using semi-definite programming~\citep{HH-MF-MM-GJP:20} and linear programming~\citep{ME-GH-CS-JPH:21}. Branch-and-Bound~\citep{TE-SS-MF:22} and suitable partitioning~\citep{ME-GH-JPH:21,WX-HDT-XY-TTJ:21} approaches have also been used to improve the accuracy of reachable set estimations.  For nonlinear systems with neural network controllers, ~\cite{CS-AM-AI-MJK:22} establishes a mixed integer programming framework for reachable set over-approximation using polynomial bounds of the system dynamics. ~\cite{SD-XC-SS:19} uses polynomial bounding on the dynamics as well as the neural network to over-approximate reachable sets. Other rigorous approaches for verification of closed-loop neural network controllers include using finite-state abstractions~\citep{XS-HK-YS:19} and structured zonotopes integrated into Taylor models~\citep{CS-MF-SG:22}.

\paragraph*{Contributions.}
In this paper, we use elements of mixed monotone system theory to develop a flexible framework for safety verification of nonlinear continuous-time systems with neural network controllers. 
First, we employ CROWN---a well-established isolated neural network verification framework~\citep{HZ-etal:18}---to construct an inclusion function for a pre-trained neural network's output. 
Then, we use this inclusion function and a decomposition function of the open-loop system to construct suitable embedding systems for the closed-loop system using twice the number of states.
Finally, we simulate trajectories of these embedding systems to compute hyper-rectangular over-approximations of the closed-loop reachable sets.
Our framework has several advantageous features. It is agnostic to the neural network verifier, only requiring an inclusion function for the neural network. Additionally, our approach is fast and scalable, as only a single simulation of the embedding system is required.
This feature, combined with a clever partitioning of the state space, is used to improve the accuracy of our reachable set over-approximations while retaining computational viability. Through several numerical experiments, we study the efficiency of our approach for obstacle avoidance in a simple vehicle model and for perturbation analysis in a linear quadrotor model.  

\section{Notation and Mathematical Preliminaries}

For every $x\in \real^n$ and every $r\in \real_{\ge 0}$ we define the closed ball $B_{\infty}(x,r) = \{y\in \real^n\mid \|y-x\|_{\infty}\le r\}$. The partial order $\le$ on $\real^n$ is defined by $x\le y$ if and only if $x_i\le y_i$, for every $i\in \{1,\ldots,n\}$. For every $x\le y$, we can define the interval $[x,y]=\{z\in \real^n\mid x\le z\le y\}$. The partial order $\le$ on $\real^n$ induces the southeast partial order $\le_{\mathrm{SE}}$ on $\real^{2n}$ defined by $\left[\begin{smallmatrix}x\\\widehat{x}\end{smallmatrix}\right]\le_{\mathrm{SE}} \left[\begin{smallmatrix}y\\\widehat{y}\end{smallmatrix}\right]$ if $x\le y$ and $\widehat{y}\le \widehat{x}$. We define the subsets 
$\mathcal{T}^{2n}_{\ge 0} = \setdef{\left[\begin{smallmatrix}
    x\\
    \widehat{x}
    \end{smallmatrix}\right]\in \real^{2n}}{x\le \widehat{x}}$ and $
  \mathcal{T}^{2n}_{\le 0} = \setdef{\left[\begin{smallmatrix}
    x\\
    \widehat{x}
    \end{smallmatrix}\right]\in \real^{2n}}{x\ge \widehat{x}}$ and $
    \mathcal{T}^{2n} = \mathcal{T}^{2n}_{\ge 0}\cup \mathcal{T}^{2n}_{\le 0}$.
For every two vectors $v,w\in \real^n$ and every $i\in \{1,\ldots,n\}$, we define the vector $v_{[i:w]}\in \real^n$ by $ \left(v_{[i:w]}\right)_j = \begin{cases}
    v_j & j\ne i\\
    w_j & j = i.
    \end{cases}$.
    Given a matrix
$B \in \mathbb{R}^{n\times m}$, we denote the non-negative part of $B$
by $[B]^+ = \max(B, 0)$ and the nonpositive part of $B$ by
$[B]^- = \min(B, 0)$. The Metzler and non-Metzler part of square matrix $A\in \real^{n\times n}$
are denoted by $\lceil A \rceil^{\mathrm{Mzl}}$ and
$\lfloor A \rfloor^{\mathrm{Mzl}}$, respectively, where $
  (\lceil A \rceil^{\mathrm{Mzl}})_{ij} =\begin{cases}
    A_{ij} & A_{ij} \geq 0\; \mbox{or} \; i =  j\\
    0 & \mbox{otherwise,}
  \end{cases}$ and $\lfloor A\rfloor^{\mathrm{Mzl}}= A-\lceil A
        \rceil^{\mathrm{Mzl}}$. Consider a control system 
        \begin{align}\label{eq:control}
        \dot{x} = f(x,u)
        \end{align}
         with state $x\in \real^n$, measurable disturbance $u:\real_{\ge 0}\to \real^m$. We denote the trajectory of the system with the disturbance $u$ starting from $x_0\in \real^n$ at time $t_0$ by $t\mapsto \phi_f(t,t_0,x_0,u)$. 
         Given an initial state set $\mathcal{X}_0\subseteq \real^n$ and a disturbance set $\mathcal{W}\subseteq\real^m$, we denote the reachable set of~\eqref{eq:control} by 
         \begin{align*}
\mathcal{R}_f(t,0,\mathcal{X}_0,\mathcal{U}) = \setdef{\phi_{f}(t,0,x_0,u)}{x_0\in \mathcal{X}_0,\;\;  u:\real_{\ge 0}\to \mathcal{U} \;\;\mbox{is piecewise cont.}}
 \end{align*}
The system~\eqref{eq:control} is continuous-time monotone on $\real^n$, if, for every $i\in\{1,\ldots,n\}$, every $x\le y\in \real^n$ with $x_i=y_i$, and every $u\le v$, we have $f_i(x,u) \le f_i(y,v)$.  
One can show that if a control system~\eqref{eq:control} is continuous-time monotone on $\real^n$, then its trajectories preserve the standard partial order by time, i.e., for every two trajectories $t\mapsto x_u(t)$ and $t\mapsto y_v(t)$ of the systems~\eqref{eq:control} with the control inputs $u$ and $v$, respectively, if $x(0)\le y(0)$ and $u\le v$, then $x_u(t)\le y_v(t)$ for all $t\in \real_{\ge 0}$~\citep{HLSS:95}.

 In general, computing the exact reachable sets of the system~\eqref{eq:control} is computationally intractable. Mixed monotone theory~\citep{GAE-HLS-EDS:06,HLS:08} has recently emerged as a computationally efficient framework for over-approximating the reachable sets of nonlinear systems~\citep{SC-MA:15b}. The control system~\eqref{eq:control} is mixed monotone if there exists a locally Lipschitz \emph{decomposition function} $\OF:\mathcal{T}^{2n}\times\mathcal{T}^{2m}\to \real^{n}$, satisfying
\begin{enumerate}[nosep]
\item\label{p1:decom} $\OF_i(x,x,u,u)=f_i(x,u)$, for every $x\in \real^n$ and every $u\in \real^p$;
\item\label{p2:decom} $\OF_i(x,\widehat{x},u,\widehat{u})\le
      \OF_i(y,\widehat{y},u,\widehat{u})$, for every $x\le y$ such
      that $x_i=y_i$, and every $\widehat{y}\le \widehat{x}$;
    \item\label{p3:decom} $\OF_i(x,\widehat{x},u,\widehat{u})\le \OF_i(x,\widehat{x},v,\widehat{v})$, for every $u\le v$ and every $\widehat{v}\le \widehat{u}$.
\end{enumerate}     
Several different approaches have proposed in the literature for constructing the decomposition function for the control system~\eqref{eq:control}~\citep{PJM-AD-MA:19a,MA-MD-SC:21,MA-SC:2021}.
Using a decomposition function $\OF$, we construct an embedding system associated to~\eqref{eq:control}:
\begin{gather} \label{eq:olembsys}
    \frac{d}{dt}\begin{bmatrix}x \\ \widehat{x}\end{bmatrix} = 
    \begin{bmatrix} \OF(x,\widehat{x},{u},\widehat{u}) \\ \OF(\widehat{x},{x},\widehat{u},{u}) 
    \end{bmatrix}
\end{gather}
 The key result in mixed monotone reachability theory is as follows: if $t\mapsto \left[\begin{smallmatrix}\underline{x}(t)\\ \overline{x}(t)\end{smallmatrix}\right]$ is the trajectory of the embedding system~\eqref{eq:olembsys} with disturbance $\left[\begin{smallmatrix}u\\ \widehat{u} \end{smallmatrix}\right] = \left[\begin{smallmatrix}\underline{u}\\ \overline{u} \end{smallmatrix}\right]$ starting from $\left[\begin{smallmatrix}\underline{x}_0\\ \overline{x}_0\end{smallmatrix}\right]$, then $\mathcal{R}_f(t,0,[\underline{x}_0,\overline{x}_0],[\underline{u},\overline{u}])\subseteq [\underline{x}(t),\overline{x}(t)]$, for every $t\in \real_{\ge 0}$~\cite[Proposition 6]{PJM-AD-MA:19a}. Given a map $g: \real^n\to \real^m$, the function $\left[\begin{smallmatrix}\underline{\OG}\\ \overline{\OG}\end{smallmatrix}\right]:\mathcal{T}_{\ge 0}^{2n}\to \mathcal{T}_{\ge 0}^{2m}$ is \emph{an inclusion function for $g$} if, for every $x\le \widehat{x}$ and every $z\in [x,\widehat{x}]$, we have $ \underline{\OG}(x,\widehat{x})\le g(z)\le \overline{\OG}(x,\widehat{x})$. 
Note that our definition of inclusion function is closely-related to the notion of \emph{inclusion interval function} in the interval analysis~\cite[Section 2.4.1]{LJ-MK-OD-EW:01}. 


\section{Problem Statement}

 We consider a nonlinear plant of the form
\begin{align}\label{eq:plant}
  \dot{x}&= f(x,u,w)
\end{align}
where $x\in \real^n$ is the state of the system, $u\in \real^p$ is the
control input, and $w\in \mathcal{W}\subseteq \real^q$ is the disturbance.  We assume that $f:\real^n\times \real^p\times \real^q\to \real^n$ is a parameterized vector field and the state feedback is parameterized by a $k$-layer
feed-forward neural network controller $\ON:\real^{n}\to \real^p$ defined by:
  \begin{align}\label{eq:NN}
   \xi^{(i)}(x) &= \phi^{(i)}(W^{(i-1)} \xi^{(i-1)}(x)+ b^{(i-1)}), \quad i\in \{1,\ldots,k\}\nonumber\\
     x &= \xi^{(0)}\qquad  u = W^{(k)}\xi^{(k)}(x) +b^{(k)} := \ON(x),
    \end{align}
 where $n_i$ is the number of neurons in the $i$th layer, $W^{(i-1)}\in \real^{n_i\times n_{i-1}}$ is the weight matrix of the $i$th layer, $b^{(i-1)}\in \real^{n_i}$ is the bias vector of the $i$th layer, $\xi^{(i)}(y)\in \real^{n_i}$ is the $i$-th layer hidden variable, and $\phi^{(i)}:\real^{n_{i}}\to \real^{n_i}$ is $i$th layer diagonal activation function satisfying $0\le
 \frac{\phi^{(i)}_j(x)-\phi^{(i)}_j(y)}{x-y}\le 1$, for every $j\in \{1,\ldots,n_{i}\}$. One can show that a large class of
 activation function including, but not restricted to, ReLU, leaky ReLU, sigmoid, and tanh satisfies this condition. We assume that, the neural network~\eqref{eq:NN} is trained to approximate an offline controller with the auxiliary objective of achieving a goal set $\mathcal{G}\subseteq \real^n$ while remaining in a safe set $\mathcal{S}\subseteq \real^n$. Our aim is to verify the safety of the closed-loop system given by:
 \begin{align}\label{eq:closedloop}
     \dot{x} &= f(x,\ON(x),w) := f^{\mathrm{cl}}(x,w).
 \end{align}
 i.e., to ensure that the closed-loop system avoids the unsafe domain $\real^n/\mathcal{S}$. 
 Given an initial state set $\mathcal{X}_0\subseteq \real^n$, our goal is to check if $\mathcal{R}_{f^{\mathrm{cl}}}(t,0,\mathcal{X}_0,\mathcal{W})\subseteq \mathcal{S}$ holds for every $t\in \real_{\ge 0}$. Our approach is based on constructing a computationally efficient over-approximation $\overline{\mathcal{R}}_{f^{\mathrm{cl}}}(t,0,\mathcal{X}_0,\mathcal{W})$ of the reachable set of the closed-loop system. Then, avoiding the unsafe set $\real^n/\mathcal{S}$ is guaranteed when $\overline{\mathcal{R}}_{f^{\mathrm{cl}}}(t,0,\mathcal{X}_0,\mathcal{W})\subseteq \mathcal{S}$, for every $t\in\real_{\ge 0}$.
 
 
 \section{Input-output reachability of neural networks}
 
 In order to estimate the input-output behavior of the neural network controller, we use the verification algorithm called CROWN~\citep{HZ-etal:18}. We first use CROWN to obtain an inclusion function for the input-output map of the neural network. Consider the neural network~\eqref{eq:NN} and let the perturbed input vector $x$ be in the interval $[\underline{x},\overline{x}]$. For every $i\in \{1,\ldots,k\}$, we define the pre-activation input to $i$th layer as $z^{(i)} = W^{(i)} \xi^{(i-1)}(x) + b^{(i)}$. When $x$ is perturbed in the interval $[\underline{x},\overline{x}]$, we assume that $L^{(i)},U^{(i)}\in \real^{n_i}$ are such that $L^{(i)} \le z^{(i)} \le U^{(i)}$. These bounds can be obtained using a fast neural network verification algorithm such as the Interval Bound Propagation (IBP) method~\citep[Equations (6) and (7)]{SG-etal:19}. In this case, for the $j$th neuron in the $i$th layer, there exist $\alpha^{(i)}_{U,j},\beta^{(i)}_{U,j},\alpha^{(i)}_{L,j},\beta^{(i)}_{L,j}$ such that 
 \begin{align}\label{eq:crownbound}
    \alpha^{(i)}_{L,j}(z + \beta^{(i)}_{L,j})  \le \phi_i(z) \le \alpha^{(i)}_{U,j}(z + \beta^{(i)}_{U,j}),\quad \mbox{ for every } L_j^{(i)}\le z \le U^{i}_j.
 \end{align}
 For every $x\in [\underline{x},\overline{x}]$, the output of the neural network is bounded as below:
    \begin{align}\label{eq:crown}
     \underline{A}(\underline{x},\overline{x})x + \underline{b}(\underline{x},\overline{x}) \le \ON(x) \le \overline{A}(\underline{x},\overline{x})x + \overline{b}(\underline{x},\overline{x}), 
 \end{align}
where functions  $\underline{A},\overline{A}: \mathcal{T}^{2n}_{\ge 0}\to \real^{p\times n}$  and $\underline{b},\overline{b}:\mathcal{T}^{2n}_{\ge 0}\to \real^p$ are defined as follows:
\begin{align}\label{eq:ABcrown}
    \overline{A}(\underline{x},\overline{x}) &= \Lambda^{(0)},\qquad \overline{b}(\underline{x},\overline{x}) = \sum\nolimits_{i=1}^{k} \Lambda^{(i)}b^{(i)} + \mathrm{diag}(\Lambda^{(i)}\Delta^{(i)}),\nonumber\\
    \underline{A}(\underline{x},\overline{x}) &= \Omega^{(0)},\qquad \underline{b}(\underline{x},\overline{x}) = \sum\nolimits_{i=1}^{k} \Omega^{(i)}b^{(i)} + \mathrm{diag}(\Omega^{(i)}\Theta^{(i)})
\end{align}
where, for every $i\in \{1,\ldots,k\}$, $\Lambda^{(i)},\Delta^{(i)},\Omega^{(i)},\Theta^{(i)}\in \real^{n_k\times n_i}$ are as defined in~\cite[Theorem 3.2]{HZ-etal:18} for the input perturbation set $B_{\infty}(\tfrac{\underline{x}+\overline{x}}{2},\tfrac{\overline{x}-\underline{x}}{2})$.

  \begin{theorem}[Inclusion functions for Neural Networks]\label{thm:crown-rectangle}
 Consider the $k$-layer feed-forward neural network $u=\ON(x)$ given by~\eqref{eq:NN}. Then, 
\begin{enumerate}
\item\label{p1:crown} for every input perturbation interval $[\underline{x},\overline{x}]$, the mapping $\left[\begin{smallmatrix}\underline{\OG}_{[\underline{x},\overline{x}]}\\ \overline{\OG}_{[\underline{x},\overline{x}]}\end{smallmatrix}\right]:\mathcal{T}_{\ge 0}^{2n}\to \mathcal{T}_{\ge 0}^{2p}$ defined by 
 \begin{align}\label{eq:crown-inclusion}
 \underline{\OG}_{[\underline{x},\overline{x}]}(x,\widehat{x}) &= [\underline{A}(\underline{x},\overline{x})]^{+} x  + [\underline{A}(\underline{x},\overline{x})]^{-} \widehat{x} + \underline{b}(\underline{x},\overline{x}),\nonumber\\
  \overline{\OG}_{[\underline{x},\overline{x}]}(x,\widehat{x}) & = [\overline{A}(\underline{x},\overline{x})]^{+} \widehat{x}  + [\overline{A}(\underline{x},\overline{x})]^{-}x + \overline{b}(\underline{x},\overline{x})
 \end{align}
 is an inclusion function for the neural network $\ON$ on $[\underline{x},\overline{x}]$; 
 \item\label{p2:crown} the mapping $\left[\begin{smallmatrix}\underline{\OH}\\ \overline{\OH}\end{smallmatrix}\right]:\mathcal{T}_{\ge 0}^{2n}\to \mathcal{T}_{\ge 0}^{2p}$ defined by $\underline{\OH}(x,\widehat{x}) = \underline{\OG}_{[x,\widehat{x}]}(x,\widehat{x})$ and $\overline{\OH}(x,\widehat{x}) = \overline{\OG}_{[x,\widehat{x}]}(x,\widehat{x})$ is an inclusion function for the neural network $\ON$ on $\real^n$. 
 \end{enumerate}
 \end{theorem}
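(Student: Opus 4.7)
The whole statement is essentially a corollary of the CROWN inequality (5) combined with the standard trick of splitting a matrix into its nonnegative and nonpositive entries. The plan has three short steps: (a) reduce the inclusion property to bounding linear maps over a box; (b) verify that reduction using the $[\cdot]^+,[\cdot]^-$ decomposition; (c) derive part (ii) as an immediate specialization of part (i). There is no significant technical obstacle---the only care needed is in the bookkeeping of signs.

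For part (i), I would fix an input perturbation interval $[\underline{x},\overline{x}]$ and consider any $x\le \widehat{x}$ with $[x,\widehat{x}]\subseteq [\underline{x},\overline{x}]$ and any $z\in[x,\widehat{x}]$. Because $z$ lies in the interval on which CROWN was calibrated, inequality (5) yields
\begin{equation*}
\underline{A}(\underline{x},\overline{x})\,z + \underline{b}(\underline{x},\overline{x}) \;\le\; \ON(z) \;\le\; \overline{A}(\underline{x},\overline{x})\,z + \overline{b}(\underline{x},\overline{x}).
\end{equation*}
It therefore suffices to show, for an arbitrary matrix $A\in\real^{p\times n}$ and any $x\le z\le \widehat{x}$, the two sandwich inequalities
\begin{equation*}
[A]^{+}x + [A]^{-}\widehat{x} \;\le\; Az \;\le\; [A]^{+}\widehat{x} + [A]^{-}x.
\end{equation*}
Writing $A=[A]^{+}+[A]^{-}$, the first half follows because $[A]^{+}\ge 0$ entrywise gives $[A]^{+}x\le [A]^{+}z$, while $[A]^{-}\le 0$ entrywise flips the order so $[A]^{-}\widehat{x}\le [A]^{-}z$; the upper sandwich is symmetric. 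Applying these inequalities once with $A=\underline{A}(\underline{x},\overline{x})$ and once with $A=\overline{A}(\underline{x},\overline{x})$, and then adding the corresponding bias terms $\underline{b}(\underline{x},\overline{x})$ and $\overline{b}(\underline{x},\overline{x})$, reproduces exactly the formulas (8) for $\underline{\OG}_{[\underline{x},\overline{x}]}$ and $\overline{\OG}_{[\underline{x},\overline{x}]}$, establishing the inclusion property on $[\underline{x},\overline{x}]$.

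For part (ii), the plan is simply to specialize: given any $x\le\widehat{x}$ in $\real^{n}$, the interval $[x,\widehat{x}]$ is trivially a subset of itself, so invoking part (i) with the perturbation interval taken to be $[\underline{x},\overline{x}]=[x,\widehat{x}]$ immediately gives $\underline{\OG}_{[x,\widehat{x}]}(x,\widehat{x})\le \ON(z)\le \overline{\OG}_{[x,\widehat{x}]}(x,\widehat{x})$ for every $z\in[x,\widehat{x}]$. By definition of $\underline{\OH}$ and $\overline{\OH}$, this is the required global inclusion property on $\real^{n}$.

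The only point that deserves a brief sentence of justification is that the matrix-valued maps $\underline{A},\overline{A}$ and bias maps $\underline{b},\overline{b}$ are indeed well-defined functions of the endpoints $(\underline{x},\overline{x})$, which is needed to give sense to the substitution $[\underline{x},\overline{x}]=[x,\widehat{x}]$ in part (ii); this is built into the construction in (7) together with the IBP preliminary bounds $L^{(i)},U^{(i)}$. Beyond that, nothing substantive is required---the theorem is essentially a sign-bookkeeping consequence of the CROWN inequality.
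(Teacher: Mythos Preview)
Your proof is correct and follows essentially the same route as the paper: invoke the CROWN linear bounds~\eqref{eq:crown}, split the coefficient matrices as $[\cdot]^{+}+[\cdot]^{-}$ to sandwich $Az$ over the box, and then specialize $[\underline{x},\overline{x}]=[x,\widehat{x}]$ to obtain part~\ref{p2:crown}. The paper's proof additionally checks that $\underline{\OH}(z,z)=\overline{\OH}(z,z)=\ON(z)$ on the diagonal, but under the definition of inclusion function given in the preliminaries this extra verification is not required, so your argument is complete as written.
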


 \begin{proof}
Regarding part~\ref{p1:crown}, suppose that $\eta\le \widehat{\eta}\in [\underline{x},\overline{x}]$ and $z\in [\eta,\widehat{\eta}]$. Then,
  \begin{align*}
     \overline{\OG}_{[\underline{x},\overline{y}]}(\eta,\widehat{\eta}) & = [\overline{A}(\underline{x},\overline{x})]^+\widehat{\eta} + [\overline{A}(\underline{x},\overline{x})]^-\eta + \overline{b}(\overline{x},\overline{x}) \ge  [\overline{A}(\underline{x},\overline{x})]^+ z + [\overline{A}(\underline{x},\overline{x})]^-z + \overline{b}(\overline{x},\overline{x}) \\ & = [\overline{A}(\underline{x},\overline{x})]z + \overline{b}(\overline{x},\overline{x}) \ge \ON(z),
 \end{align*}
 where the first inequality holds because $z\in [\eta,\widehat{\eta}]$, the matrix $[\overline{A}(\underline{x},\overline{x})]^+$ is non-negative, and the matrix $[\overline{A}(\underline{x},\overline{x})]^+$ is non-positive. The second inequality holds by noting the fact that $[\overline{A}(\underline{x},\overline{x})]^+ +[\overline{A}(\underline{x},\overline{x})]^- = [\overline{A}(\underline{x},\overline{x})]$, for every $\underline{x}\le \overline{x}$. Finally, the last inequality holds by~\eqref{eq:crown}. Similarly, one can show that $\underline{\OG}_{[\underline{x},\overline{x}]}(\widehat{\eta},\eta)\le \ON(z)$. Regarding part~\ref{p2:crown}, suppose that $x\le \widehat{x}$, for every $z\in [x,\widehat{x}]$, one can use a similar argument as part~\ref{p1:crown} to show that
 \begin{align*}
     \overline{\OG}_{[x,\widehat{x}]}(x,\widehat{x}) & = [\overline{A}(x,\widehat{x})]^+\widehat{x} + [\overline{A}(x,\widehat{x})]^-x + \overline{b}(x,\widehat{x}) \ge  [\overline{A}(x,\widehat{x})]^+ z + [\overline{A}(x,\widehat{x})]^-z + \overline{b}(x,\widehat{x}) \\ & = [\overline{A}(x,\widehat{x})]z + \overline{b}(x,\widehat{x}) \ge \ON(z).
 \end{align*}
 One can similarly show that $\underline{\OG}_{[x,\widehat{x}]}(\widehat{x},x)\le \ON(z)$, for every $z\in [x,\widehat{x}]$. Moreover, suppose that $\underline{x}=\overline{x} = z$. In this case, both inequalities in~\eqref{eq:crownbound} are equality with $\alpha^{(i)}_{L}=\alpha^{(i)}_U$ and $\beta^{(i)}_{L}=\beta^{(i)}_U$. Thus, using the equation~\eqref{eq:ABcrown}, we get that $\underline{A}(z,z)=\overline{A}(z,z)$ and $\underline{b}(z,z)=\overline{b}(z,z)$. This implies that $\ON(z) = \underline{A}(z,z)z + \underline{b}(z,z) = \underline{\OG}_{[z,z]}(z,z)$. Thus, $\left[\begin{smallmatrix}\underline{\OH}\\ \overline{\OH}\end{smallmatrix}\right]$ is an inclusion function for $\ON$ on the $\real^n$. 
 \end{proof}
 
  \begin{remark} 
The following remarks are in order. 
 \begin{enumerate}
     \item For each layer $i\in \{1,\ldots,k\}$, the intermediate pre-activation bounds $U^{i},L^{i}$ can be computed using either the Interval Bound Propagation (IBP)~\citep{SG-etal:19} or using the CROWN itself~\citep{HZ-etal:18}. In the former case, the IBP can be considered as a special case of CROWN by choosing $\alpha^{(i)}_U=\alpha^{(i)}_L=0$ and $\beta^{(i)}_L = L^{(i)}$ and $\beta^{(i)}_U = U^{(i)}$, for every $i\in \{1,\ldots,k\}$. 
     \item The inclusion function $\left[\begin{smallmatrix}\underline{\OG}_{[\underline{x},\overline{x}]}\\ \overline{\OG}_{[\underline{x},\overline{x}]}\end{smallmatrix}\right]$ defined in Theorem~\ref{thm:crown-rectangle}\ref{p1:crown} is linear but it depends on the input perturbation interval $[\underline{x}, \overline{x}]$. On the other hand, the inclusion function $\left[\begin{smallmatrix}\underline{\OH}\\ \overline{\OH}\end{smallmatrix}\right]$ defined in Theorem~\ref{thm:crown-rectangle}\ref{p2:crown} is nonlinear but it is independent of the input perturbation set. For both inclusion functions, it can be shown that the smaller the input perturbation interval, the tighter the relaxation bounds in equation~\eqref{eq:crownbound} are and, thus, the tighter the over- and the under-approximations of the neural network in equation~\eqref{eq:crown} are. 
 \end{enumerate}
 \end{remark}

  \section{Interval reachability of closed-loop system}
 
 In this section, inspired by mixed monotone reachability, we present a system-level approach for over-approximating the reachable set of the closed-loop system~\eqref{eq:closedloop} with the neural network controller~\eqref{eq:NN}. The key idea is to design a suitable function $E= \left[\begin{smallmatrix}\underline{E}\\ \overline{E}\end{smallmatrix}\right]:\mathcal{T}_{\ge 0}^{2n}\times \mathcal{T}_{\ge 0}^{2q}\to \real^{n}$ and use it to construct the following embedding system associated to the closed-loop system~\eqref{eq:closedloop}:
 \begin{align}\label{eq:closedloop-embedding}
     \frac{d}{dt}\begin{bmatrix}
      x\\
      \widehat{x}
     \end{bmatrix} = \begin{bmatrix}
      \underline{E}(x,\widehat{x}, w,\widehat{w})\\
      \overline{E}(\widehat{x},x,\widehat{w},w)
     \end{bmatrix}
 \end{align}
 Then, one can use the embedding system~\eqref{eq:closedloop-embedding} to study the propagation of the state and disturbance bounds with time. Suppose that the initial condition set is given by $\mathcal{X}_0\subseteq [\underline{x}_0,\overline{x}_0]$ and the disturbance set is given by $\mathcal{W}\subseteq [\underline{w},\overline{w}]$. Let $\OF$ be a decomposition function for the open-loop system~\eqref{eq:plant} and $\left[\begin{smallmatrix}\underline{\OG}\\ \overline{\OG}\end{smallmatrix}\right]$ and $\left[\begin{smallmatrix}\underline{\OH}\\ \overline{\OH}\end{smallmatrix}\right]$ be the inclusion functions of the neural network~\eqref{eq:NN} established in Theorem~\ref{thm:crown-rectangle}. We introduce
 ``global'' function $E^{\mathrm{G}}$, ``hybrid'' function $E^{\mathrm{H}}$, and ``local'' function $E^{\mathrm{L}}$ as follows:
  \begin{align} \label{eq:EGEHEL}
  \begin{aligned}
\begin{bmatrix}
    \underline{E}_i^{\mathrm{G}}(x,\widehat{x},w,\widehat{w})\\
    \overline{E}_i^{\mathrm{G}}(x,\widehat{x},w,\widehat{w})
\end{bmatrix} &= \begin{bmatrix}\OF_i(x,\widehat{x},\underline{\OH}(x,\widehat{x}),\overline{\OH}(x,\widehat{x}), w,\widehat{w})\\
      \OF_i(\widehat{x},x,\overline{\OH}(x,\widehat{x}),\underline{\OH}(x,\widehat{x}),\widehat{w},w)
      \end{bmatrix}\\
      \begin{bmatrix}
    \underline{E}_i^{\mathrm{H}}(x,\widehat{x},w,\widehat{w})\\
    \overline{E}_i^{\mathrm{H}}(x,\widehat{x},w,\widehat{w})
\end{bmatrix} &= \begin{bmatrix} \OF_i(x,\widehat{x},\underline{\OG}_{[x,\widehat{x}]}(x,\widehat{x}_{[i:x]}),\overline{\OG}_{[x,\widehat{x}]}(x,\widehat{x}_{[i:x]}), w,\widehat{w})\\
\OF_i(\widehat{x},x,\overline{\OG}_{[x,\widehat{x}]}(x_{[i:\widehat{x}]},\widehat{x}),\underline{\OG}_{[x,\widehat{x}]}(x_{[i:\widehat{x}]},\widehat{x}), \widehat{w},w)
      \end{bmatrix}\\
      \begin{bmatrix}
    \underline{E}_i^{\mathrm{L}}(x,\widehat{x},w,\widehat{w})\\
    \overline{E}_i^{\mathrm{L}}(x,\widehat{x},w,\widehat{w})
\end{bmatrix} &= \begin{bmatrix} \OF_i(x,\widehat{x},\underline{\OH}(x,\widehat{x}_{[i:x]}),\overline{\OH}(x,\widehat{x}_{[i:x]}), w,\widehat{w})\\
      \OF_i(\widehat{x},x,\overline{\OH}(x_{[i:\widehat{x}]},\widehat{x}),\underline{\OH}(x_{[i:\widehat{x}]},\widehat{x}), \widehat{w},w)
      \end{bmatrix},
      \end{aligned}
 \end{align}
 for every $i\in\{1,\ldots,n\}$. 


 \begin{theorem}[Interval over-approximation of reachable sets]\label{thm:main}
 Consider the control system~\eqref{eq:plant} with the neural network controller~\eqref{eq:NN}. Suppose that the initial condition $x_0$ belongs to $\mathcal{X}_0\subseteq [\underline{x}_0,\overline{x}_0]$ and the disturbance $w$ belongs to $\mathcal{W}\subseteq [\underline{w},\overline{w}]$. Let $\OF:\mathcal{T}^{2n}\times \mathcal{T}^{2p}\times \mathcal{T}^{2q}\to \real^n$ be a decomposition function for the open-loop system~\eqref{eq:plant}.  Given $s\in \{\mathrm{G},\mathrm{L},\mathrm{H}\}$, consider  $E^{\mathrm{G}},E^{\mathrm{H}},E^{\mathrm{L}}$ as defined in~\eqref{eq:EGEHEL} and let $t\mapsto \left[\begin{smallmatrix}\underline{x}^{s}(t)\\ \overline{x}^{s}(t)\end{smallmatrix}\right]$ be the trajectory of the embedding system~\eqref{eq:closedloop-embedding} with $E = E^s$ and disturbance $\left[\begin{smallmatrix}w\\ \widehat{w} \end{smallmatrix}\right] = \left[\begin{smallmatrix}\underline{w}\\ \overline{w} \end{smallmatrix}\right]$ starting from $\left[\begin{smallmatrix}\underline{x}_0\\ \overline{x}_0\end{smallmatrix}\right]$.  Then, the following statements hold:  
 \begin{enumerate}
     \item\label{p2:reachestimate} for every $s\in \{\mathrm{G},\mathrm{L},\mathrm{H}\}$ and every $t\in \real_{\ge 0}$, we have 
     \begin{align*}
         \mathcal{R}_{f^{\mathrm{cl}}}(t,0,\mathcal{X}_0,\mathcal{W})\subseteq [\underline{x}^{s}(t),\overline{x}^{s}(t)],
     \end{align*}
\end{enumerate}

\begin{enumerate}\setcounter{enumi}{1}
    \item\label{p4:order} If all the activation functions of the neural network~\eqref{eq:NN} are ReLU, then, for every $t\in \real_{\ge 0}$, 
 \begin{align*}
     \mathcal{R}_{f^{\mathrm{cl}}}(t,0,\mathcal{X}_0,\mathcal{W})\subseteq [\underline{x}^{\mathrm{L}}(t),\overline{x}^{\mathrm{L}}(t)]\subseteq [\underline{x}^{\mathrm{H}}(t),\overline{x}^{\mathrm{H}}(t)]\subseteq  [\underline{x}^{\mathrm{G}}(t),\overline{x}^{\mathrm{G}}(t)]. 
 \end{align*}
\end{enumerate}
 \end{theorem}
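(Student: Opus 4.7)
The overarching strategy is to reduce both claims to the classical mixed-monotone reachability principle (``Proposition 6'' of \cite{PJM-AD-MA:19a}, which is recalled in Section~2). Concretely, once we verify that each of $E^{\mathrm{G}}$, $E^{\mathrm{H}}$, $E^{\mathrm{L}}$ is a valid decomposition function for the closed-loop vector field $f^{\mathrm{cl}}(x,w) = f(x,\ON(x),w)$, part~\ref{p2:reachestimate} follows immediately: the trajectory of~\eqref{eq:closedloop-embedding} with $E = E^s$ bounds the reachable set of $f^{\mathrm{cl}}$ in the interval $[\underline{x}^s(t), \overline{x}^s(t)]$. For part~\ref{p4:order}, I will establish a pointwise ordering of the three vector fields and propagate it to the trajectories via the comparison property of embedding systems.

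For part~\ref{p2:reachestimate}, I have to check the three decomposition-function axioms of Section~2 for $E^s$, where the ``control'' slot of $\OF$ is populated by the appropriate $\ON$-inclusion function. The consistency condition $\underline{E}^s_i(x,x,w,w) = f_i(x,\ON(x),w)$ is clean: when $x=\widehat{x}$ the index-swap $\widehat{x}_{[i:x]}$ collapses to $x$, and Theorem~\ref{thm:crown-rectangle} together with the inclusion property forces $\underline{\OG}_{[x,x]}(x,x) = \overline{\OG}_{[x,x]}(x,x) = \underline{\OH}(x,x) = \overline{\OH}(x,x) = \ON(x)$, after which property~\ref{p1:decom} of $\OF$ closes the argument. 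Disturbance monotonicity (property~\ref{p3:decom}) transfers transparently, since $w$ and $\widehat{w}$ enter only through the last two slots of $\OF$. The state monotonicity (property~\ref{p2:decom}) is the first substantive step: as $(x,\widehat{x})$ changes with $x_i$ fixed and $\widehat{x}$ decreasing, the first two slots of $\OF_i$ move in the correct direction by property~\ref{p2:decom} for $\OF$, and the control slots move correctly provided the inclusion function used is \emph{inclusion-isotone} (i.e., $\underline{\OG},\underline{\OH}$ are nondecreasing in the first argument and nonincreasing in the second, with reversed roles for $\overline{\OG},\overline{\OH}$). For $\OG_{[\underline{x},\overline{x}]}$ this is immediate from the sign split $[\underline{A}]^{\pm},[\overline{A}]^{\pm}$ in~\eqref{eq:crown-inclusion}; for $\OH$, where the outer interval also varies, the argument additionally uses that tightening the CROWN input interval tightens the pre-activation bounds $L^{(i)},U^{(i)}$ and hence the relaxation~\eqref{eq:crownbound}.

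For part~\ref{p4:order}, the plan is to derive pointwise inequalities between the three vector fields at every $(x,\widehat{x},w,\widehat{w}) \in \mathcal{T}_{\ge 0}^{2n}\times \mathcal{T}_{\ge 0}^{2q}$, namely
\[
 \underline{E}^{\mathrm{G}}_i \le \underline{E}^{\mathrm{H}}_i \le \underline{E}^{\mathrm{L}}_i, \qquad \overline{E}^{\mathrm{L}}_i \le \overline{E}^{\mathrm{H}}_i \le \overline{E}^{\mathrm{G}}_i.
\]
The middle inequality $E^{\mathrm{H}}$ vs.\ $E^{\mathrm{L}}$ isolates the effect of the outer interval in the CROWN construction: both use the same inner argument $(x,\widehat{x}_{[i:x]})$ or $(x_{[i:\widehat{x}]},\widehat{x})$, but $E^{\mathrm{L}}$ evaluates the inclusion bound on the strictly tighter outer interval $[x,\widehat{x}_{[i:x]}] \subseteq [x,\widehat{x}]$, yielding a larger $\underline{\OH}$ and smaller $\overline{\OH}$; composing with the input monotonicity of $\OF_i$ gives the desired inequality. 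The comparison $E^{\mathrm{G}}$ vs.\ $E^{\mathrm{H}}$ isolates the effect of the inner argument: replacing $(x,\widehat{x})$ by $(x,\widehat{x}_{[i:x]})$ collapses the $i$th coordinate of the upper argument, which by the same inclusion-isotonicity tightens the bracket on $\ON$ and, via $\OF_i$'s input monotonicity, raises $\underline{E}_i$ and lowers $\overline{E}_i$. Once the pointwise inequalities are in place, I invoke the comparison lemma for the embedding system~\eqref{eq:closedloop-embedding} with respect to the southeast order $\le_{\mathrm{SE}}$: since the three trajectories share the same initial condition $\left[\begin{smallmatrix}\underline{x}_0\\ \overline{x}_0\end{smallmatrix}\right]$ and the $E^s$-embedding vector field is itself monotone in $\le_{\mathrm{SE}}$ (being an embedding), the pointwise rate ordering lifts to $\left[\begin{smallmatrix}\underline{x}^{\mathrm{G}}(t)\\ \overline{x}^{\mathrm{G}}(t)\end{smallmatrix}\right] \le_{\mathrm{SE}} \left[\begin{smallmatrix}\underline{x}^{\mathrm{H}}(t)\\ \overline{x}^{\mathrm{H}}(t)\end{smallmatrix}\right] \le_{\mathrm{SE}} \left[\begin{smallmatrix}\underline{x}^{\mathrm{L}}(t)\\ \overline{x}^{\mathrm{L}}(t)\end{smallmatrix}\right]$, which is exactly the claimed nested interval inclusion.

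The main obstacle I expect is justifying the inclusion-isotonicity of $\OH$ on which both parts lean, since the outer interval in $\OH(x,\widehat{x}) = \OG_{[x,\widehat{x}]}(x,\widehat{x})$ varies simultaneously with the inner arguments. The ReLU hypothesis in part~\ref{p4:order} is the natural place where this becomes clean: for ReLU with either IBP or nested-CROWN pre-activation bounds, all CROWN slopes $\alpha_{L,j}^{(i)},\alpha_{U,j}^{(i)}$ lie in $[0,1]$, so the coefficient matrices $\Lambda^{(i)},\Omega^{(i)}$ in~\eqref{eq:ABcrown} are products of non-negative matrices and the dependence of $(\underline{A},\underline{b}, \overline{A},\overline{b})$ on $(\underline{x},\overline{x})$ can be tracked sign-wise through each layer, yielding the required monotonicity. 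For general activations (sigmoid, tanh) only the weaker part~\ref{p2:reachestimate} is asserted, which is consistent with this being the delicate point.
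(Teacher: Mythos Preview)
Your strategy for part~\ref{p2:reachestimate} has a real gap. You propose to verify directly that each $E^{s}$ is a decomposition function for $f^{\mathrm{cl}}$ and then invoke the mixed-monotone reachability result. But the state-monotonicity axiom~\ref{p2:decom} for $E^{\mathrm{G}}$, $E^{\mathrm{H}}$, $E^{\mathrm{L}}$ requires exactly the inclusion-isotonicity of $\OH$ (and of $\OG_{[x,\widehat{x}]}$ with the outer interval varying) that you yourself flag as ``the main obstacle.'' You only argue this cleanly under ReLU, yet part~\ref{p2:reachestimate} must hold for general activations; your final paragraph is circular when it says ``for general activations only the weaker part~\ref{p2:reachestimate} is asserted,'' since your own route to part~\ref{p2:reachestimate} already consumes the isotonicity. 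Concretely: for $E^{\mathrm{G}}$ you need $\underline{\OH}(x,\widehat{x})\le \underline{\OH}(y,\widehat{y})$ whenever $[y,\widehat{y}]\subseteq [x,\widehat{x}]$, and ``tightening the pre-activation bounds tightens~\eqref{eq:crownbound}'' does not by itself give componentwise monotonicity of $\underline{A},\overline{A},\underline{b},\overline{b}$ for sigmoid or tanh relaxations.

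The paper sidesteps this entirely. It never checks that $E^{s}$ is a decomposition function. Instead it brings in the \emph{tight} closed-loop decomposition $d^{c}$ from~\cite{MA-MD-SC:21}, which is $\le_{\mathrm{SE}}$-monotone by construction, and proves the pointwise comparison $\left[\begin{smallmatrix}\underline{E}^{s}\\ \overline{E}^{s}\end{smallmatrix}\right]\le_{\mathrm{SE}}\left[\begin{smallmatrix}d^{c}(x,\widehat{x},w,\widehat{w})\\ d^{c}(\widehat{x},x,\widehat{w},w)\end{smallmatrix}\right]$ using only the inclusion-function property $\underline{\OG}\le \ON\le \overline{\OG}$ (no isotonicity needed: for $z\in [x,\widehat{x}_{[i:x]}]$ one has $\ON(z)\in[\underline{\OG}_{[x,\widehat{x}]}(x,\widehat{x}_{[i:x]}),\overline{\OG}_{[x,\widehat{x}]}(x,\widehat{x}_{[i:x]})]$, so the $\min$ defining $d^{c}_i$ is over a smaller feasible set). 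A one-sided comparison lemma with $d^{c}$ as the monotone field then yields $[\underline{x}^{c}(t),\overline{x}^{c}(t)]\subseteq [\underline{x}^{s}(t),\overline{x}^{s}(t)]$, and the known containment $\mathcal{R}_{f^{\mathrm{cl}}}\subseteq [\underline{x}^{c},\overline{x}^{c}]$ finishes. This is the missing idea you need for part~\ref{p2:reachestimate}. Your plan for part~\ref{p4:order} (pointwise $\le_{\mathrm{SE}}$ ordering of $E^{\mathrm{G}},E^{\mathrm{H}},E^{\mathrm{L}}$ under ReLU, then a comparison argument) matches the paper.
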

 \begin{proof}
Regarding part~\ref{p2:reachestimate}, we provide the proof for $s=\mathrm{H}$. The proof of other cases are mutadis mutandis with $s=\mathrm{H}$. Note that, by~\citep[Theorem 1]{MA-MD-SC:21}, the tight decomposition function $d^{c}$ for the closed-loop system $f^{\mathrm{cl}}$ and the tight decomposition function $d^{o}$ for the open loop system $f$ can be computed as follows:
 \begin{align}\label{eq:tight-closedloop}
     d_i^{c}(x,\widehat{x},w,\widehat{w}) &= \begin{cases}
     \min_{z\in [x,\widehat{x}],\xi\in [w,\widehat{w}]\atop z_i=x_i} f_i(z,\ON(z),\xi), & x\le \widehat{x},w\le\widehat{w}\\
     \max_{z\in [\widehat{x},x],\xi\in [w,\widehat{w}]\atop z_i=\widehat{x}_i} f_i(z,\ON(z),\xi), &  \widehat{x}\le x , \widehat{w}\le w.
     \end{cases}\\
     d_i^{o}(x,\widehat{x},u,\widehat{u},w,\widehat{w}) &= \begin{cases}
     \min_{z\in [x,\widehat{x}],\xi\in [w,\widehat{w}]\atop z_i=x_i, \eta\in [u,\widehat{u}]} f_i(z,\eta,\xi), & x\le\widehat{x} , u\le\widehat{u}, w\le\widehat{w}, \\
     \max_{z\in [\widehat{x},x],\xi\in [w,\widehat{w}]\atop z_i=\widehat{x}_i,\eta\in [u,\widehat{u}]} f_i(z,\eta,\xi), & \widehat{x}\le x , \widehat{u}\le u, \widehat{w}\le w.
     \end{cases}
 \end{align}
 The trajectory of the embedding system~\eqref{eq:closedloop-embedding} with $d = d^{\mathrm{c}}$ and disturbance $\left[\begin{smallmatrix}w\\ \widehat{w} \end{smallmatrix}\right] = \left[\begin{smallmatrix}\underline{w}\\ \overline{w} \end{smallmatrix}\right]$ starting from $\left[\begin{smallmatrix}\underline{x}_0\\ \overline{x}_0\end{smallmatrix}\right]$  is denoted by $t\mapsto \left[\begin{smallmatrix}\underline{x}^{c}(t)\\ \overline{x}^{c}(t)\end{smallmatrix}\right]$. 
 Since $\OF$ is a decomposition function for the open-loop system $f$, for every $x\le \widehat{x}$, $u\le \widehat{u}$, $w\le \widehat{w}$, and $i\in \{1,\ldots,n\}$, we get
\begin{align}\label{eq:usefulinequality}
  \OF_i(x,\widehat{x},u,\widehat{u},w,\widehat{w}) \le  d_i^{o}(x,\widehat{x},u,\widehat{u},w,\widehat{w}) =   \min_{z\in [x,\widehat{x}], z_i=x_i\atop \xi\in [w,\widehat{w}], \eta\in [u,\widehat{u}]} f_i(z,\eta,\xi).
\end{align}
Suppose that $x\le \widehat{x}$ and let $i\in \{1,\ldots,n\}$ and $y \in [x,\widehat{x}]$ be such that $y_i=x_i$. By Theorem~\ref{thm:crown-rectangle}\ref{p1:crown},  $\left[\begin{smallmatrix} \underline{\OG}_{[x,\widehat{x}]} \\ \overline{\OG}_{[x,\widehat{x}]}\end{smallmatrix}\right]$ is an inclusion function for $\ON$ on $[x,\widehat{x}]$. Moreover, $y\in [x,\widehat{x}_{[i:x]}]\subseteq [x,\widehat{x}]$ and thus
 \begin{align}\label{eq:keyinequality}
     \underline{\OG}_{[x,\widehat{x}]}(x,\widehat{x}_{[i:x]}) \le \ON(y) \le \overline{\OG}_{[x,\widehat{x}]}(x,\widehat{x}_{[i:x]}).  
 \end{align}
Therefore, for every $i\in\{1,\ldots,n\}$, we get
 \begin{align}\label{eq:inequality-tight}
    d_i^{c}(x,\widehat{x},w,\widehat{w}) & = \min_{z\in [x,\widehat{x}],\xi\in [w,\widehat{w}]\atop z_i=x_i} f_i(z,\ON(z),\xi) \ge \min_{z\in [x,\widehat{x}],\xi\in [w,\widehat{w}], z_i=x_i\atop u\in [\underline{\OG}_{[x,\widehat{x}]}(x,\widehat{x}_{[i:x]}),\overline{\OG}_{[x,\widehat{x}]}(x,\widehat{x}_{[i:x]})]} f_i(z,u,\xi) \nonumber \\ & \ge \OF_i(x,\widehat{x}, \underline{\OG}_{[x,\widehat{x}]}(x,\widehat{x}_{[i:x]}), \overline{\OG}_{[x,\widehat{x}]}(x,\widehat{x}_{[i:x]}), w,\widehat{w}) = \underline{E}_i^{\mathrm{H}}(x,\widehat{x},w,\widehat{w}).
 \end{align}
where the first equality holds by definition of $d^{c}$, the second inequality holds by equation~\eqref{eq:keyinequality}, the third inequality holds by equation~\eqref{eq:usefulinequality}, and the fourth inequality holds by definition of $\underline{E}^{\mathrm{H}}$. Similarly, one can show that $d_i^{c}(\widehat{x},x,\widehat{w},w) \le \overline{E}_i^{\mathrm{H}}(\widehat{x},x,\widehat{w},w)$, for every $i\in \{1,\ldots,n\}$. This implies that $\left[\begin{smallmatrix}\underline{E}^{\mathrm{H}}(x,\widehat{x},w,\widehat{w})\\\overline{E}^{\mathrm{H}}(\widehat{x},x,\widehat{w},w)\end{smallmatrix}\right] \le_{\mathrm{SE}} \left[\begin{smallmatrix}d^{c}(x,\widehat{x},w,\widehat{w})\\ d^{c}(\widehat{x},x,\widehat{w},w)\end{smallmatrix}\right]$, for every $x\le \widehat{x}$ and every $w\le \widehat{w}$. Note that, by~\cite[Theorem 1]{MA-MD-SC:21}, the vector field $\left[\begin{smallmatrix}d^{c}(x,\widehat{x},w,\widehat{w})\\ d^{c}(\widehat{x},x,\widehat{w},w)\end{smallmatrix}\right]$ is monotone with respect to the southeast order $\le_{\mathrm{SE}}$ on $\real^{2n}$.  Now, we can use~\cite[Theorem 3.8.1]{ANM-LH-DL:08}, to deduce that $\left[\begin{smallmatrix}\underline{x}^{\mathrm{H}}(t)\\\overline{x}^\mathrm{H}(t)\end{smallmatrix}\right] \le_{\mathrm{SE}} \left[\begin{smallmatrix}\underline{x}^{c}(t)\\\overline{x}^c(t)\end{smallmatrix}\right]$, for every $t\in \real_{\ge 0}$. This implies that $[\underline{x}^{c}(t),\overline{x}^{c}(t)]\subseteq [\underline{x}^{\mathrm{H}}(t),\overline{x}^{\mathrm{H}}(t)]$. On the other hand, by~\citep[Theorem 2]{MA-MD-SC:21}, we know that $\mathcal{R}_{f^{cl}}(t,\mathcal{X}_0,\mathcal{W})\subseteq [\underline{x}^{c}(t),\overline{x}^{c}(t)]$, for every $t\in \real_{\ge 0}$. This lead to $\mathcal{R}_{f^{cl}}(t,\mathcal{X}_0,\mathcal{W})\subseteq [\underline{x}^{\mathrm{H}}(t),\overline{x}^{\mathrm{H}}(t)]$, for every $t\in \real_{\ge 0}$.

 Regarding part~\ref{p4:order}, suppose that all the activation functions are ReLU and let $L^{(i)}_{[x,\widehat{x}]}$ and $U^{(i)}_{[x,\widehat{x}]}$ are the intermediate bounds associated to the input perturbation $[x,\widehat{x}]$. Then, for every $\left[\begin{smallmatrix}x\\ \widehat{x}\end{smallmatrix}\right]\le_{\mathrm{SE}} \left[\begin{smallmatrix}y \\ \widehat{y}\end{smallmatrix}\right]\in \mathcal{T}^{2n}_{\ge 0}$, the intermediate bounds in CROWN satisfy:
 \begin{align*}
     L^{(i)}_{[x,\widehat{x}]} \le L^{(i)}_{[y,\widehat{y}]}, \qquad U^{(i)}_{[y,\widehat{y}]}\le U^{(i)}_{[x,\widehat{x}]}, \qquad\mbox{ for every } i\in \{1,\ldots,k\}.
 \end{align*}
 Since all the activation functions are ReLU, we can use~\cite[Table 1 and Table 2]{HZ-etal:18} to show that
\begin{align*}
     \alpha^{(i)}_{L,[x,\widehat{x}]} &\le  \alpha^{(i)}_{L,[y,\widehat{y}]}, \qquad \alpha^{(i)}_{U,[y,\widehat{y}]} \le  \alpha^{(i)}_{U,[x,\widehat{x}]}, \\
     \beta^{(i)}_{L,[x,\widehat{x}]} &\le  \beta^{(i)}_{L,[y,\widehat{y}]}, \qquad \beta^{(i)}_{U,[y,\widehat{y}]} \le  \beta^{(i)}_{U,[x,\widehat{x}]},  \qquad \mbox{ for every } i\in \{1,\ldots,k\}
 \end{align*}
 Therefore, using the equations~\eqref{eq:crownbound} and~\eqref{eq:ABcrown}, and the formula in~\cite[Theorem 3.2]{HZ-etal:18} for $\Lambda^{(i)},\Delta^{(i)},\Omega^{(i)},\Theta^{(i)}$, we get
 \begin{align*}
    \underline{A}(x,\widehat{x}) &\le  \underline{A}(y,\widehat{y}), \qquad \overline{A}(y,\widehat{y}) \le  \overline{A}(x,\widehat{x}), \\
     \underline{b}(x,\widehat{x}) & \le  \underline{b}(y,\widehat{y}), \qquad \overline{b}(y,\widehat{y}) \le \overline{b}(x,\widehat{x}),
 \end{align*}
 This implies that 
 \begin{align*}
     \underline{\OG}_{[x,\widehat{x}]}(w,\widehat{w}) &= [\underline{A}(x,\widehat{x})]^+ w + [\underline{A}(x,\widehat{x})]^- \widehat{w}  \le   [\underline{A}(y,\widehat{y})]^+ w + [\underline{A}(y,\widehat{y})]^- \widehat{w} \\ &\le [\underline{A}(y,\widehat{y})]^+ v + [\underline{A}(y,\widehat{y})]^- \widehat{v}  = \underline{\OG}_{[y,\widehat{y}]}(v,\widehat{v}).
 \end{align*}
 Similarly, we can show that $\overline{\OG}_{[y,\widehat{y}]}(v,\widehat{v}) \le  \overline{\OG}_{[x,\widehat{x}]}(w,\widehat{w})$. As a result, for every $\left[\begin{smallmatrix}x\\ \widehat{x}\end{smallmatrix}\right]\le_{\mathrm{SE}} \left[\begin{smallmatrix}y \\ \widehat{y}\end{smallmatrix}\right]\in \mathcal{T}^{2n}_{\ge 0}$ and $\left[\begin{smallmatrix}w\\\widehat{w}\end{smallmatrix}\right]\le_{\mathrm{SE}} \left[\begin{smallmatrix}v\\\widehat{v}\end{smallmatrix}\right]\in \mathcal{T}^{2n}_{\ge 0}$,
 \begin{align}\label{eq:inequalitymain}
 \left[\begin{smallmatrix}\underline{\OG}_{[x,\widehat{x}]}(w,\widehat{w})\\  \overline{\OG}_{[x,\widehat{x}]}(w,\widehat{w}) \end{smallmatrix}\right] \le_{\mathrm{SE}} \left[\begin{smallmatrix}\underline{\OG}_{[y,\widehat{y}]}(v,\widehat{v})\\  \overline{\OG}_{[y,\widehat{y}]}(v,\widehat{v}) \end{smallmatrix}\right] 
 \end{align}
  Note that, using property~\eqref{eq:inequalitymain}, one can easily check that $E^{\mathrm{G}},E^{\mathrm{L}},E^{\mathrm{H}}$ are continuous-time monotone vector fields with respect to the southeast order $\le_{\mathrm{SE}}$ on $\real^{2n}$. On the other hand, we have $\widehat{x}_{[i:x]} \le \widehat{x}$. Therefore, using~\eqref{eq:inequalitymain},
  \begin{align}\label{eq:inequalitymain2}
\left[\begin{smallmatrix}\underline{\OG}_{[x,\widehat{x}]}(x,\widehat{x})\\ 
 \overline{\OG}_{[x,\widehat{x}]}(x,\widehat{x}) \end{smallmatrix}\right]
 \le_{\mathrm{SE}} 
 \left[\begin{smallmatrix}\underline{\OG}_{[x,\widehat{x}]}(x,\widehat{x}_{[i:x]})\\  \overline{\OG}_{[x,\widehat{x}]}(x,\widehat{x}_{[i:x]}) \end{smallmatrix}\right],\qquad \mbox{for every } x\le \widehat{x}
 \end{align}
 This implies that, for every $x\le \widehat{x}$ and every $w\le \widehat{w}$, and every $i\in\{1,\ldots,n\}$, we have
 \begin{align*}
     \underline{E}_i^{\mathrm{G}}(x,\widehat{x},w,\widehat{w}) &=  \OF_i(x,\widehat{x},\underline{\OG}_{[x,\widehat{x}]}(x,\widehat{x}),\overline{\OG}_{[x,\widehat{x}]}(x,\widehat{x}), w,\widehat{w}) \\ & \le  \OF_i(x,\widehat{x},\underline{\OG}_{[x,\widehat{x}]}(x,\widehat{x}_{[i:x]}),\overline{\OG}_{[x,\widehat{x}]}(x,\widehat{x}_{[i:x]}), w,\widehat{w}) = \underline{E}_i^{\mathrm{H}}(x,\widehat{x},w,\widehat{w})
 \end{align*}
 where the inequality holds using equation~\eqref{eq:inequalitymain2} and the fact that $\OF$ is a decomposition function for the open-loop system~\eqref{eq:plant}. Similarly, we can show that $\overline{E}_i^{\mathrm{H}}(\widehat{x},x,\widehat{w},w)\le \overline{E}_i^{\mathrm{G}}(\widehat{x},x,\widehat{w},w)$, for every $x\le \widehat{x}$ and every $w\le \widehat{w}$ and every $i\in\{1,\ldots,n\}$. This implies that $\left[\begin{smallmatrix}\underline{E}^{\mathrm{G}}(x,\widehat{x},w,\widehat{w})\\ \overline{E}^{\mathrm{G}}(x,\widehat{x},w,\widehat{w})\end{smallmatrix}\right] \le_{\mathrm{SE}} \left[\begin{smallmatrix}\underline{E}^{\mathrm{H}}(x,\widehat{x},w,\widehat{w})\\ \overline{E}^{\mathrm{H}}(x,\widehat{x},w,\widehat{w})\end{smallmatrix}\right]$, for every $x\le \widehat{x}$ and every $w\le \widehat{w}$. Note again that the vector field $E^{\mathrm{G}}$ is continuous-time monotone with respect to the southeast order $\le_{\mathrm{SE}}$ on $\real^{2n}$. Now, we can use~\cite[Theorem 3.8.1]{ANM-LH-DL:08}, to deduce that $\left[\begin{smallmatrix}\underline{x}^{\mathrm{G}}(t)\\\overline{x}^\mathrm{G}(t)\end{smallmatrix}\right] \le_{\mathrm{SE}} \left[\begin{smallmatrix}\underline{x}^{\mathrm{H}}(t)\\\overline{x}^{\mathrm{H}}(t)\end{smallmatrix}\right]$, for every $t\in \real_{\ge 0}$. This implies that $[\underline{x}^{\mathrm{H}}(t),\overline{x}^{\mathrm{H}}(t)]\subseteq [\underline{x}^{\mathrm{G}}(t),\overline{x}^{\mathrm{G}}(t)]$.  The proofs for the other inclusions are similar and we remove it for the sake of brevity. 
 \end{proof}

When the open-loop system~\eqref{eq:plant} is linear with $f(x,u,w)=Ax + B u + C w$ where $A\in \real^{n\times n}$, $B\in \real^{n\times p}$, and $C\in \real^{n\times q}$, one can find another embedding function for the closed-loop system~\eqref{eq:closedloop}:
 \begin{align}\label{eq:Elin}
 E^{\mathrm{Lin}}(x,\widehat{x},w,\widehat{w}) = \left[\begin{smallmatrix}\lceil A + R(x,\widehat{x}) \rceil^{\mathrm{Mzl}} & \lfloor A + R(x,\widehat{x}) \rfloor^{\mathrm{Mzl}} \\ \lfloor A + S(x,\widehat{x}) \rfloor^{\mathrm{Mzl}} & \lceil A + S(x,\widehat{x}) \rceil^{\mathrm{Mzl}}\end{smallmatrix}\right]\left[\begin{smallmatrix}
 x\\ \widehat{x}\end{smallmatrix}\right] + \left[\begin{smallmatrix}
 C^+ & C^-\\C^-& C^+\end{smallmatrix}\right]\left[\begin{smallmatrix}
 w\\ \widehat{w}\end{smallmatrix}\right],
 \end{align}
 where $R(x,\widehat{x})= B^{+}[\underline{A}(x,\widehat{x})] + B^{-}[\overline{A}(x,\widehat{x})]$ and $S(x,\widehat{x})= B^{+}[\overline{A}(x,\widehat{x})] + B^{-}[\underline{A}(x,\widehat{x})]$. The next theorem compares reachable set over-approximation using $E^{\mathrm{Lin}}$ and hybrid function $E^{\mathrm{H}}$.
 
 \begin{theorem}[Linear systems]\label{eq:linear}
 Consider the control system~\eqref{eq:plant} with $f(x,u,w)=Ax + B u + C w$ with $A\in \real^{n\times n}$, $B\in \real^{n\times p}$, and $C\in \real^{n\times q}$ and with the neural network controller~\eqref{eq:NN}. Suppose that the initial set satisfies $\mathcal{X}_0\subseteq [\underline{x}_0,\overline{x}_0]$ and the disturbance set satisfies $\mathcal{W}\subseteq [\underline{w},\overline{w}]$. Then,
 \begin{enumerate}
     \item\label{p1} We recover all the results of Theorem~\eqref{thm:main} with the following ``hybrid'' function: 
 \begin{align*}  
      E^{\mathrm{H}}(x,\widehat{x},w,\widehat{w}) = \left[\begin{smallmatrix}\lceil A\rceil^{\mathrm{Mzl}} + \lceil R(x,\widehat{x}) \rceil^{\mathrm{Mzl}} & \lfloor A\rfloor^{\mathrm{Mzl}} + \lfloor R(x,\widehat{x}) \rfloor^{\mathrm{Mzl}} \\ \lfloor A \rfloor^{\mathrm{Mzl}}+ \lfloor S(x,\widehat{x}) \rfloor^{\mathrm{Mzl}} & \lceil A \rceil^{\mathrm{Mzl}} + \lceil S(x,\widehat{x}) \rceil^{\mathrm{Mzl}}\end{smallmatrix}\right]\left[\begin{smallmatrix}
 x\\ \widehat{x}\end{smallmatrix}\right] + \left[\begin{smallmatrix}
 C^+ & C^-\\C^-& C^+ \end{smallmatrix}\right]\left[\begin{smallmatrix}
 w\\ \widehat{w}\end{smallmatrix}\right].
 \end{align*}
 where $R(x,\widehat{x})= B^{+}[\underline{A}(x,\widehat{x})] + B^{-}[\overline{A}(x,\widehat{x})]$ and $S(x,\widehat{x})= B^{+}[\overline{A}(x,\widehat{x})] + B^{-}[\underline{A}(x,\widehat{x})]$.
     \item\label{p2} if $t\mapsto \left[\begin{smallmatrix}\underline{x}^{\mathrm{Lin}}(t)\\ \overline{x}^{\mathrm{Lin}}(t)\end{smallmatrix}\right]$ is the trajectory of the embedding system~\eqref{eq:Elin} with disturbance $\left[\begin{smallmatrix}w\\ \widehat{w} \end{smallmatrix}\right] = \left[\begin{smallmatrix}\underline{w}\\ \overline{w} \end{smallmatrix}\right]$ starting from $\left[\begin{smallmatrix}\underline{x}_0\\ \overline{x}_0\end{smallmatrix}\right]$ then, for every $t\in \real_{\ge 0}$,
 \begin{align*}
    \mathcal{R}_{f^{\mathrm{cl}}}(t,0,\mathcal{X}_0,\mathcal{W})\subseteq [\underline{x}^{\mathrm{Lin}}(t),\overline{x}^{\mathrm{Lin}}(t)]\subseteq [\underline{x}^{\mathrm{H}}(t),\overline{x}^{\mathrm{H}}(t)].
     \end{align*}
 \end{enumerate} 
 \end{theorem}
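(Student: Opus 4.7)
My plan is to verify both parts by combining the CROWN linear bounds from Theorem~\ref{thm:crown-rectangle} with classical facts about Metzler decomposition of matrices, then reusing the comparison machinery from the proof of Theorem~\ref{thm:main}.

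For Part~(i), I take the natural decomposition function for the linear open-loop plant, $\OF_i(x,\widehat{x},u,\widehat{u},w,\widehat{w}) = \lceil A\rceil^{\mathrm{Mzl}}_i x + \lfloor A\rfloor^{\mathrm{Mzl}}_i \widehat{x} + B^{+}_i u + B^{-}_i \widehat{u} + C^{+}_i w + C^{-}_i \widehat{w}$, and aim to prove the pointwise inequality $\underline{E}^{\mathrm{H}}(x,\widehat{x},w,\widehat{w})\le d^{c}(x,\widehat{x},w,\widehat{w})$, together with its SE-symmetric partner for $\overline{E}^{\mathrm{H}}$, where $d^{c}$ is the tight closed-loop decomposition used in the proof of Theorem~\ref{thm:main}. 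The argument chains three elementary bounds. First, the CROWN lower bound from Theorem~\ref{thm:crown-rectangle} gives $B\,\ON(z)\ge R(x,\widehat{x})\,z+B^{+}\underline{b}+B^{-}\overline{b}$ for every $z\in[x,\widehat{x}]$, hence $f^{\mathrm{cl}}_i(z,\xi)\ge (A+R(x,\widehat{x}))_i z+B^{+}_i\underline{b}+B^{-}_i\overline{b}+C_i\xi$ for every such $z$ and every $\xi\in[w,\widehat{w}]$. Second, minimizing the right-hand side over $z$ with $z_i=x_i$ and over $\xi$ produces the joint Metzler expression $\lceil A+R\rceil^{\mathrm{Mzl}}_i x+\lfloor A+R\rfloor^{\mathrm{Mzl}}_i \widehat{x}+C^{+}_i w+C^{-}_i\widehat{w}$, plus the bias contribution. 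Third, the elementary off-diagonal inequalities $[m_1+m_2]^{+}\le[m_1]^{+}+[m_2]^{+}$ and $[m_1+m_2]^{-}\ge[m_1]^{-}+[m_2]^{-}$, combined with $x\le\widehat{x}$, show that this joint Metzler expression dominates the separate Metzler expression that defines $\underline{E}^{\mathrm{H}}$ in the statement. Chaining the three steps yields $\underline{E}^{\mathrm{H}}\le d^{c}$; the upper bound is symmetric. I then recycle the comparison argument from the proof of Theorem~\ref{thm:main}(i), which exploits the SE-cooperativeness of the tight embedding and \cite[Theorem~3.8.1]{ANM-LH-DL:08}, to conclude $\mathcal{R}_{f^{\mathrm{cl}}}(t,0,\mathcal{X}_0,\mathcal{W})\subseteq[\underline{x}^{\mathrm{H}}(t),\overline{x}^{\mathrm{H}}(t)]$.

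For Part~(ii), the inclusion $\mathcal{R}_{f^{\mathrm{cl}}}\subseteq[\underline{x}^{\mathrm{Lin}},\overline{x}^{\mathrm{Lin}}]$ is obtained by repeating Part~(i)'s chain but stopping at the joint Metzler step, since $E^{\mathrm{Lin}}$ uses $\lceil A+R\rceil^{\mathrm{Mzl}}$ directly. The pointwise comparison $\underline{E}^{\mathrm{H}}\le \underline{E}^{\mathrm{Lin}}$ and $\overline{E}^{\mathrm{Lin}}\le\overline{E}^{\mathrm{H}}$ is exactly the third step above, now applied with $M_1=A$ and $M_2=R(x,\widehat{x})$. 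The trajectory-level inclusion $[\underline{x}^{\mathrm{Lin}}(t),\overline{x}^{\mathrm{Lin}}(t)]\subseteq[\underline{x}^{\mathrm{H}}(t),\overline{x}^{\mathrm{H}}(t)]$ then follows from \cite[Theorem~3.8.1]{ANM-LH-DL:08}, provided one of the two embedding vector fields is SE-cooperative.

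The main obstacle I expect is verifying SE-cooperativeness in Part~(ii), since $R(x,\widehat{x})$ depends nonlinearly on $(x,\widehat{x})$ through the CROWN matrices $\underline{A}$ and $\overline{A}$, so the embedding vector field is genuinely nonlinear. The cleanest route is to reuse the ReLU-specific monotonicity of $\underline{A},\overline{A},\underline{b},\overline{b}$ already established in the proof of Theorem~\ref{thm:main}(ii): combined with the order-preserving nature of $[\cdot]^{+}$, $[\cdot]^{-}$, $\lceil\cdot\rceil^{\mathrm{Mzl}}$ and $\lfloor\cdot\rfloor^{\mathrm{Mzl}}$, this delivers SE-monotonicity of $E^{\mathrm{Lin}}$ on $\mathcal{T}^{2n}_{\ge 0}$ and closes the comparison.
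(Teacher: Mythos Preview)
Your proposal is correct in substance, but Part~(i) takes a different and longer route than the paper. The paper does not re-derive the comparison $\underline{E}^{\mathrm{H}}\le d^{c}$; instead it shows that the displayed formula \emph{equals} the specialization of the general $E^{\mathrm{H}}$ from~\eqref{eq:EGEHEL} when one plugs in the linear decomposition $\OF(x,\widehat{x},u,\widehat{u},w,\widehat{w}) = \lceil A\rceil^{\mathrm{Mzl}} x + \lfloor A\rfloor^{\mathrm{Mzl}}\widehat{x} + B^{+}u + B^{-}\widehat{u} + C^{+}w + C^{-}\widehat{w}$. The device that makes this work is the elementary identity $\bigl(M^{+}x + M^{-}\widehat{x}_{[i:x]}\bigr)_i = \bigl(\lceil M\rceil^{\mathrm{Mzl}} x + \lfloor M\rfloor^{\mathrm{Mzl}}\widehat{x}\bigr)_i$, which collapses the $\widehat{x}_{[i:x]}$ replacement inside $\underline{\OG}_{[x,\widehat{x}]}(x,\widehat{x}_{[i:x]})$ directly into the Metzler splitting. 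Once this equality is in hand, ``recover all the results of Theorem~\ref{thm:main}'' is literal and free, including the ordering in Theorem~\ref{thm:main}\ref{p4:order}. Your chain $d^{c}\ge\text{(joint Metzler)}\ge\text{(separate Metzler)}$ is valid and in fact anticipates Part~(ii), but it only delivers the reachable-set containment, not the identification with the hybrid function of~\eqref{eq:EGEHEL}; you would still need the identity above to claim ``all the results''.

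For Part~(ii) the two arguments coincide: both use the splitting inequality $\bigl(\lceil M\rceil^{\mathrm{Mzl}}+\lceil L\rceil^{\mathrm{Mzl}}\bigr)x+\bigl(\lfloor M\rfloor^{\mathrm{Mzl}}+\lfloor L\rfloor^{\mathrm{Mzl}}\bigr)\widehat{x}\le\lceil M+L\rceil^{\mathrm{Mzl}}x+\lfloor M+L\rfloor^{\mathrm{Mzl}}\widehat{x}$ to obtain $E^{\mathrm{H}}\le_{\mathrm{SE}} E^{\mathrm{Lin}}$ pointwise, then invoke~\cite[Theorem~3.8.1]{ANM-LH-DL:08}. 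The only difference is which vector field supplies the required SE-monotonicity: the paper uses $E^{\mathrm{H}}$ (claiming it follows from the explicit form in Part~(i)), whereas you propose $E^{\mathrm{Lin}}$ via the ReLU-specific monotonicity of the CROWN matrices from the proof of Theorem~\ref{thm:main}\ref{p4:order}. Either suffices for the comparison lemma, and your caution about the nonlinear dependence of $R,S$ on $(x,\widehat{x})$ is well placed---the paper's one-line justification here leans on the same underlying structure.
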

 \begin{proof}
 Note that a decomposition function for the open-loop system is given by $\OF(x,\widehat{x},u,\widehat{u},w,\widehat{w}) = \lceil A \rceil^{\mathrm{Mzl}} x + \lfloor A \rfloor^{\mathrm{Mzl}}\widehat{x} + B^{+}u + B^{-}\widehat{u} + C^{+}w + C^{-} \widehat{w}$~\cite[Example 3]{SC:20}. Using simple algebraic manipulations, one can show that for every $i\in \{1,\ldots,n\}$ and every $M,L\in \real^{n\times n}$,
 \begin{align}
\left(M^{+}x + M^{-}\widehat{x}_{[i:x]}\right)_i &=  \left(\lceil M \rceil^{\mathrm{Mzl}} x + \lfloor M \rfloor^{\mathrm{Mzl}}\widehat{x}\right)_i, \label{eq:manipulation-1}\\
\left(\lceil M \rceil^{\mathrm{Mzl}} + \lceil L \rceil^{\mathrm{Mzl}}\right) x & + \left(\lfloor M \rfloor^{\mathrm{Mzl}} + \lfloor L \rfloor^{\mathrm{Mzl}}\right)\widehat{x} \le \lceil M + L \rceil^{\mathrm{Mzl}} x + \lfloor M +L \rfloor^{\mathrm{Mzl}}\widehat{x}. \label{eq:manipulation-2} 
 \end{align}
Thus, using Theorem~\ref{thm:main} and the identity~\eqref{eq:manipulation-1}, for every $i\in \{1,\ldots,n\}$, we have
 \begin{align*}
     \underline{E}_i^{\mathrm{H}}(x,\widehat{x},w,\widehat{w}) & = \left(\lceil A \rceil^{\mathrm{Mzl}} x + \lfloor A \rfloor^{\mathrm{Mzl}}\widehat{x} + B^{+} [\underline{A}]^{+}(x,\widehat{x})x +  B^{+}[\underline{A}]^{-}(x,\widehat{x})\widehat{x}_{[i:x]} \right)_i
     \\ & \qquad  + \left( B^{-}[\overline{A}]^{+}(x,\widehat{x})x +  B^{-}[\overline{A}]^{-}(x,\widehat{x})\widehat{x}_{[i:x]} + C^{+}w + C^{-} \widehat{w}  \right)_i
     \\  & = \left(\lceil A \rceil^{\mathrm{Mzl}} x + \lfloor A \rfloor^{\mathrm{Mzl}}\widehat{x} + \lceil B^{+} \underline{A}(x,\widehat{x}) \rceil^{\mathrm{Mzl}}x +  \lfloor B^{+}\underline{A}(x,\widehat{x})\rfloor^{\mathrm{Mzl}}\widehat{x}\right)_i 
     \\ & \qquad  + \left(\lfloor B^{-}\overline{A}(x,\widehat{x})\rfloor^{\mathrm{Mzl}}x +  \lceil B^{-}\overline{A}(x,\widehat{x})\rceil^{\mathrm{Mzl}}\widehat{x} + C^{+}w + C^{-} \widehat{w}\right)_i
     \\ &= \left(\lceil A \rceil^{\mathrm{Mzl}} x + \lceil B^{+}\underline{A}(x,\widehat{x}) + B^{-}\overline{A}(x,\widehat{x})\rceil^{\mathrm{Mzl}} x\right)_i \\ & + \left(\lfloor A + \lfloor \widehat{x} + \lfloor B^{+}[\overline{A}(x,\widehat{x})]  + B^{-}[\underline{A}(x,\widehat{x})]\rfloor^{\mathrm{Mzl}}\widehat{x} \right)_i
     \\ & \qquad +  \left(C^{+}w + C^{-} \widehat{w}\right)_i \\ &  = \left(\lceil A \rceil^{\mathrm{Mzl}} x + \lceil R(x,\widehat{x}) \rceil^{\mathrm{Mzl}}x + \lfloor A \rfloor^{\mathrm{Mzl}} \widehat{x} + \lfloor S(x,\widehat{x}) \rfloor^{\mathrm{Mzl}} \widehat{x} + C^{+}w + C^{-} \widehat{w}\right)_i. 
 \end{align*}
 This completes the proof of part~\ref{p1}. Regarding part~\ref{p2}, using the identity~\eqref{eq:manipulation-2}, for every $i\in \{1,\ldots,n\}$, we have
 \begin{align*}
 \underline{E}_i^{\mathrm{H}}(x,\widehat{x},w,\widehat{w}) & = \left(\lceil A \rceil^{\mathrm{Mzl}} x + \lceil R(x,\widehat{x}) \rceil^{\mathrm{Mzl}}x + \lfloor A \rfloor^{\mathrm{Mzl}} \widehat{x} + \lfloor S(x,\widehat{x}) \rfloor^{\mathrm{Mzl}} \widehat{x} + C^{+}w + C^{-} \widehat{w}\right)_i \\ & \le \left(\lceil A + R(x,\widehat{x}) \rceil^{\mathrm{Mzl}}x + \lfloor A + S(x,\widehat{x}) \rfloor^{\mathrm{Mzl}} \widehat{x} + C^{+}w + C^{-} \widehat{w}\right)_i = \underline{E}_i^{\mathrm{Lin}}(x,\widehat{x},w,\widehat{w})
 \end{align*}
 Similarly, one can show that,  $\overline{E}_i^{\mathrm{Lin}}(x,\widehat{x},w,\widehat{w}) \le \overline{E}_i^{\mathrm{H}}(x,\widehat{x},w,\widehat{w})$, for every $x\le \widehat{x}$ and every $w\le \widehat{w}$ and every $i\in\{1,\ldots,n\}$. This implies that $\left[\begin{smallmatrix}\underline{E}^{\mathrm{H}}(x,\widehat{x},w,\widehat{w})\\ \overline{E}^{\mathrm{H}}(x,\widehat{x},w,\widehat{w})\end{smallmatrix}\right] \le_{\mathrm{SE}} \left[\begin{smallmatrix}\underline{E}^{\mathrm{Lin}}(x,\widehat{x},w,\widehat{w})\\ \overline{E}^{\mathrm{Lin}}(x,\widehat{x},w,\widehat{w})\end{smallmatrix}\right]$, for every $x\le \widehat{x}$ and every $w\le \widehat{w}$. By part~\ref{p1}, it is easy to show tat the vector field $E^{\mathrm{H}}$ is continuous-time monotone with respect to the southeast order $\le_{\mathrm{SE}}$ on $\real^{2n}$. Now, we can use~\cite[Theorem 3.8.1]{ANM-LH-DL:08}, to deduce that $\left[\begin{smallmatrix}\underline{x}^{\mathrm{H}}(t)\\\overline{x}^\mathrm{H}(t)\end{smallmatrix}\right] \le_{\mathrm{SE}} \left[\begin{smallmatrix}\underline{x}^{\mathrm{Lin}}(t)\\\overline{x}^{\mathrm{Lin}}(t)\end{smallmatrix}\right]$, for every $t\in \real_{\ge 0}$. This implies that $[\underline{x}^{\mathrm{Lin}}(t),\overline{x}^{\mathrm{Lin}}(t)]\subseteq [\underline{x}^{\mathrm{H}}(t),\overline{x}^{\mathrm{H}}(t)]$. The other inclusion follows from Theorem~\ref{thm:main}. 
 \end{proof}
 
 \begin{remark} The following remarks are in order.
 \begin{enumerate}
 
 \item \textbf{(Comparison with the literature)} Theorem~\ref{eq:linear} can be used to show that the forward Euler integration of the embedding system~\eqref{eq:closedloop-embedding} with $E=E^{\mathrm{Lin}}$ and with a small enough time-step will lead to the identical over-approximation sets as~\citep[Lemma IV.3]{ME-GH-CS-JPH:21} applied to the forward Euler discretization of the linear system.  
 

 \item \textbf{(Generality of the approach)} Theorem~\ref{thm:main} proposes a general embedding-based framework for verification of the closed-loop system~\eqref{eq:closedloop} which is based on combining mixed monotone reachability of the open-loop system with interval analysis of the neural network. Using this perspective, our approach can be applied to arbitrary neural network verification algorithms as long as one can construct an inclusion function for the neural network. This is in contrast with most of the existing approaches for neural network closed-loop reachability analysis, which are heavily dependent on the neural network verification algorithm (see for instance~\citep{ME-GH-CS-JPH:21} and~\citep{CS-AM-AI-MJK:22}).
 
 \item \textbf{(Computational complexity)} From a computational perspective, the framework presented in Theorem~\ref{thm:main} consists of two main ingredients: (i) evaluating CROWN to compute the inclusion function of the neural network as in Theorem~\ref{thm:crown-rectangle}, and (ii) integrating the embedding dynamical system~\eqref{eq:closedloop-embedding}. For a neural network with $k$-layer and $N$ neuron per layer, the complexity of CROWN is $\mathcal{O}(k^2N^{3})$~\citep{HZ-etal:18}. Moreover, the functions $E^{\mathrm{G}}$ and $E^{\mathrm{H}}$ call CROWN once per integration step, while the function $E^{\mathrm{L}}$ calls CROWN $n$ times per integration step. The run time of the integration process depends on the form of the open-loop decomposition function $\OF$. 

 \end{enumerate}
\end{remark}

 
 
\section{Efficient reachability analysis via partitioning}

In this section, we develop a suitable partitioning of the state space and combine it with Theorem~\eqref{thm:main} to obtain a computationally efficient algorithm for generating reachable set over-approximations of the closed-loop system~\eqref{eq:closedloop}. 
%
%
Our partitioning strategy consists of two main components: (i) a uniform division of the state space of the embedding system~\eqref{eq:closedloop-embedding} to compute the neural network inclusion functions using CROWN, and (ii) a uniform subdivision to implement the integration on the embedding system~\eqref{eq:closedloop-embedding}. We first pick an $s\in \{\mathrm{G},\mathrm{H},\mathrm{L}\}$ and start with the initial perturbation set $\mathcal{X}_0$. In the first step, we find  the smallest hyper-rectangle containing $\mathcal{X}_0$ by computing $(\overline{x}_0)_j = \max_{x\in \mathcal{X}_0} x_j$ and  $(\underline{x}_0)_j = \min_{x\in \mathcal{X}_0} x_j$, for every $i\in \{1,\ldots,n\}$. We then divide the hyper-rectangle $[\underline{x}_0,\overline{x}_0]$ into $D_a$ partitions and obtain the set $\{[\underline{x}^1,\overline{x}^1],\ldots,[\underline{x}^{D_a},\overline{x}^{D_a}]\}$. For every $k\in \{1,\ldots, D_a\}$, we compute the trajectory of the embedding system~\eqref{eq:closedloop-embedding} with $E = E^{\mathrm{s}}$ and the initial condition $\left[\begin{smallmatrix}\underline{x}^k\\ \overline{x}^k\end{smallmatrix}\right]$ at time $\Delta t$ as in Theorem~\ref{thm:main}. For every $k\in \{1,\ldots, D_a\}$, we divide each states of the hyper-rectangle $[\underline{x}^k,\overline{x}^k]$ into $D_s$ partitions, obtaining the subpartitions  $\{[\underline{x}^{k,1},\overline{x}^{k,1}],\ldots,[\underline{x}^{k,D_s},\overline{x}^{k,D_s}]\}$. For every $k\in \{1,\ldots,D_a\}$ and $l\in \{1,\ldots,D_s\}$, we compute the trajectory of the embedding system~\eqref{eq:closedloop-embedding} with $E = E^{\mathrm{Bs}}$  defined by 
\begin{align}\label{eq:BH}
\begin{bmatrix}
   \underline{E}_i^{\mathrm{Bs}}(x,\widehat{x},w,\widehat{w})\\
\overline{E}_i^{\mathrm{Bs}}(x,\widehat{x},w,\widehat{w})
\end{bmatrix}
   = \begin{bmatrix}
\OF_i(x,\widehat{x},\underline{\OG}_{[\underline{x}^{k},\overline{x}^{k}]}(x,\widehat{x}_{[i:x]}),\overline{\OG}_{[\underline{x}^{k},\overline{x}^k]}(\widehat{x}_{[i:x]},x), w,\widehat{w})\\
    \OF_i(\widehat{x}, x,\underline{\OG}_{[\underline{x}^{k},\overline{x}^{k}]}(x_{[i:\widehat{x}]},\widehat{x}),\overline{\OG}_{[\underline{x}^{k},\overline{x}^k]}(\widehat{x},x_{[i:\widehat{x}]}), \widehat{w},w)
    \end{bmatrix},
\end{align}
and with the initial condition $\left[\begin{smallmatrix}\underline{x}^{k,l}\\ \overline{x}^{k,l}\end{smallmatrix}\right]$ at time $\Delta t$ as in Theorem~\ref{thm:main}. We then set $\mathcal{X}_1= \bigcup_{k=1}^{D_a} \bigcup_{l=1}^{D_s} [\underline{x}^{k,l}(\Delta t),\overline{x}^{k,l}(\Delta t)]$ and repeat this procedure on the initial set $\mathcal{X}_1$. We keep repeating this algorithm until we get to the final time $T$. Note that our partitioning approach is different from~\citep{ME-GH-JPH:21} and~\citep{WX-HDT-XY-TTJ:21} in that we re-partition the state-space at every time step. A summary of the above procedure is presented in Algorithm~\ref{alg:cap}. 

\begin{algorithm}[ht]
\caption{Over-approximation of reachable sets of~\eqref{eq:closedloop}}\label{alg:cap}
 \hspace*{\algorithmicindent} \textbf{Input:} $s\in \{\mathrm{G},\mathrm{H},\mathrm{L}\}$, the initial set $\mathcal{X}_0$, the final time $T$, the actuation step $\Delta t$, divisions $D_a$, subdivision $D_s$  \\
 \hspace*{\algorithmicindent} \textbf{Output:} the over-approximation of the reachable set $\overline{\mathcal{R}}(T,0,\mathcal{X}_0,\mathcal{W})$ 
\begin{algorithmic}[1]
\State $j \gets 0$
\While{$j < \left\lfloor \frac{T}{\Delta t} \right\rfloor$}
    \State $\underline{x}_i \gets \min_{x\in \mathcal{X}_j} x_i$, for every $i\in \{1,\ldots,n\}$
    \State $\overline{x}_i  \gets \max_{x\in \mathcal{X}_j} x_i$, for every $i\in \{1,\ldots,n\}$
    \State $\{[\underline{x}^1,\overline{x}^1],\ldots,[\underline{x}^{D_a},\overline{x}^{D_a}]\} \gets$ uniform\_partition$([\underline{x},\overline{x}], D_a)$ 
     \vspace{0.1cm}
    \For{$k=\{1,\ldots,D_a\}$}
        \State Compute $\underline{A}(\underline{x}^k,\overline{x}^k)$, $\overline{A}(\underline{x}^k,\overline{x}^i)$, $\underline{b}(\underline{x}^k,\overline{x}^k)$, and $\overline{b}(\underline{x}^k,\overline{x}^k)$ using CROWN and~\eqref{eq:ABcrown}. 
        \State $\{[\underline{x}^{k,1},\overline{x}^{k,1}],\ldots,[\underline{x}^{k,D_s},\overline{x}^{k,D_s}]\} \gets$ uniform\_partition$([\underline{x}^k,\overline{x}^k], D_s)$ 
        \vspace{0.1cm}
        \For{$l=\{1,\ldots,D_s\}$}
            \State Compute $\left[\begin{smallmatrix}\underline{x}^{k,l}(\Delta t)\\\overline{x}^{k,l}(\Delta t)\end{smallmatrix}\right]$ for system~\eqref{eq:closedloop-embedding} with $E=E^{\mathrm{Bs}}$ and initial condition $\left[\begin{smallmatrix}\underline{x}^{k,l}\\\overline{x}^{k,l}\end{smallmatrix}\right]$.
        \EndFor
    \EndFor
    \State $\mathcal{X}_{j+1} = \bigcup_{k=1}^{D_a} \bigcup_{l=1}^{D_s} [\underline{x}^{k,l}(\Delta t),\overline{x}^{k,l}(\Delta t)]$
\EndWhile
\State \Return $\overline{\mathcal{R}}(T,0,\mathcal{X}_0,\mathcal{W})\gets \mathcal{X}_{\left\lfloor \frac{T}{\Delta t} \right\rfloor}$
\end{algorithmic}
\end{algorithm}

\section{Numerical Simulations}

In this section, we show the efficiency of our reachability analysis using numerical experiments on a nonlinear vehicle model and a linear quadrotor model\footnote{All the code is available at \newline\url{https://github.com/gtfactslab/L4DC2023_NNControllerReachability}}. 


\subsection{Nonlinear Vehicle model}
We consider the dynamics of a vehicle adopted from~\citep{PP-FA-BdAN-AdlF:17} satisfying the following nonlinear ordinary differential equation: 
\begin{align}\label{eq:vehicle}
    \dot{p_x} &= v \cos(\phi + \beta(u_2)),\quad\dot{\phi}=\frac{v}{\ell_r}\sin(\beta(u_2)),\quad
    \dot{p_y} &= v \sin(\phi + \beta(u_2)),\quad\dot{v}= u_1 + w,
\end{align}
where $[p_x,p_y]^{\top}\in \real^2$ is the displacement of the center of mass, $\phi \in [-\pi,\pi)$ is the heading angle in the plane, $v\in \real^+$ is the speed of the center of mass. Control input $u_1$ is the applied force subject to disturbance $w$, input $u_2$ is the angle of the front wheels, and $\beta(u_2) = \mathrm{arctan}\left(\frac{\ell_f}{\ell_f+\ell_r}\tan(u_2)\right)$ is the slip slide angle. We set $x=[p_x,p_y,\phi,v]^\top$ and $u=[u_1,u_2]^\top$. We use the following decomposition function for the open-loop system:
\begin{align*}
    \OF({x},\widehat{x},{u},\widehat{u},{w},\widehat{w}) &=
    \begin{bmatrix} 
        d^{b_1b_2}\left([{v},d^{\mathrm{cos}}({\phi}+\beta({u}_2), \widehat{\phi}+\beta(\widehat{u}_2))]^\top,  [\widehat{v},d^{\mathrm{cos}}(\widehat{\phi}+\beta(\widehat{u}_2),{\phi}+\beta({u}_2))]^\top\right) \\
        d^{b_1b_2}\left([{v},d^{\mathrm{sin}}({\phi}+\beta({u}_2), \widehat{\phi}+\beta(\widehat{u}_2))]^\top, [\widehat{v},d^{\mathrm{sin}}(\widehat{\phi}+\beta(\widehat{u}_2),{\phi}+\beta({u}_2))]^\top\right) \\
        d^{b_1b_2}\left([{v},d^{\mathrm{sin}}(\beta({u}_2), \beta(\widehat{u}_2))]^\top, [\widehat{v},d^{\mathrm{sin}}(\beta(\widehat{u}_2), \beta({u}_2))]^\top\right) \\
        {u_1} + {w}
    \end{bmatrix},
\end{align*}
where $d^{b_1b_2}$, $d^{\mathrm{cos}}$, and $d^{\mathrm{sin}}$ are defined in~\citep{MEC-MB-SC:22}.
\begin{figure}[ht]
    \centering
    \includegraphics[width=\linewidth]{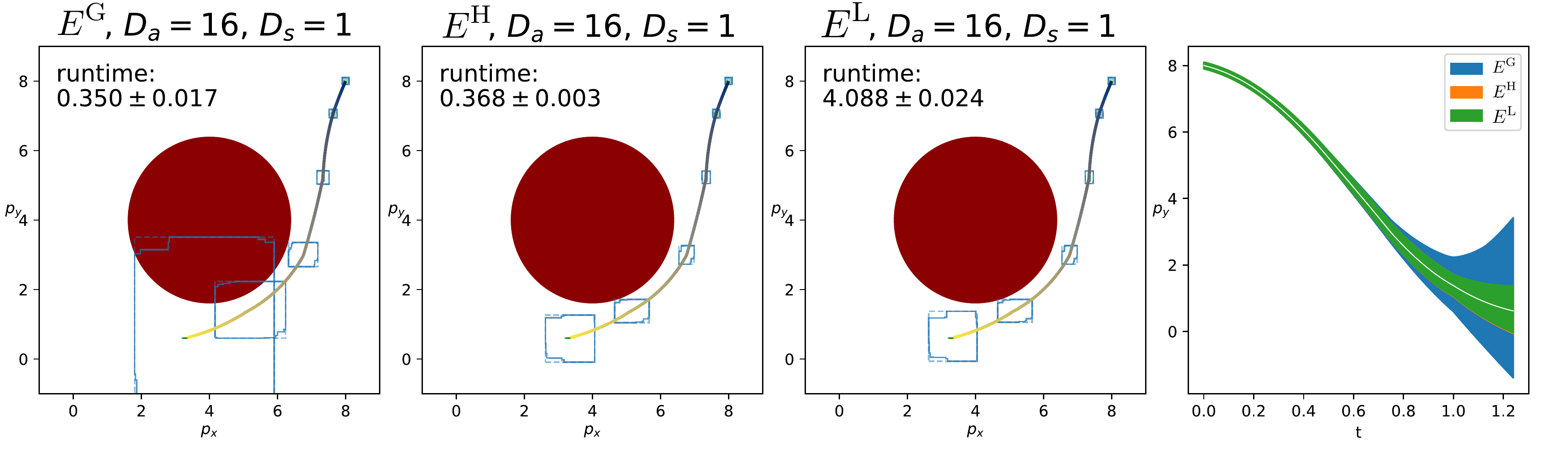}
\vspace{-0.7cm}
    \caption{Performance of the three different functions $E^{\mathrm{G}}$, $E^{\mathrm{H}}$, and $E^{\mathrm{L}}$ in Theorem~\eqref{thm:main} for over-approximation of the reachable set of the system~\eqref{eq:vehicle} with neural network $u = \ON(x)$ trained to approximate an offline model predictive controller. The $p_x-p_y$ plot of the motion of the vehicle is shown starting from an initial set  $[7.9, 8.1]^2\times[-\tfrac{2\pi}{3} - 0.01,-\tfrac{2\pi}{3} + 0.01]\times[1.99,2.01]$. The size of the over-approximations obtained using the functions $E^{\mathrm{H}}$ and $E^{\mathrm{L}}$ are close to each other, but are much smaller than the size of the over-approximations obtained using the function $E^{\mathrm{G}}$. On the other hand, finding the over-approximations using the functions $E^{\mathrm{G}}$ and $E^{\mathrm{H}}$ take about the same amount of time and are much faster than finding the over-approximations using the functions $E^{\mathrm{L}}$.  The runtimes are averaged over $10$ instances and mean and standard deviation are reported.}
    \label{fig:v_main_experiment}
    \vspace{-0.5cm}
\end{figure}
We designed an offline nonlinear model predictive controller in Python using Casadi~\citep{JAE-JG-GH-JBR-MD:19} to steer the vehicle to the origin while avoiding obstacles. We use a fixed horizon of 20 with an actuation step of 0.25 seconds, and a quadratic cost function with $Q=\mathrm{diag}(1,1,0,0)$, $Q_{hor}=\mathrm{diag}(100,100,0,1)$, and other regularizing terms tuned empirically. Additionally, we add circular obstacles with 25\% padding as hard constraints with slack variables; in Figures \ref{fig:v_main_experiment} and \ref{fig:v_partitioning_experiment}, we consider one centered at $(4,4)$ with a radius of 2.4. We simulated 65000 real trajectories (5s, 20 control actions) with initial conditions uniformly sampled from a specified region, and aggregated the data into a set of $1.3M$ training pairs $(x,u)\in\real^4\times \real^2$. A neural network $u=\ON(x)$ with $2$ hidden layers with $100$ neurons per layer and ReLU activation was trained in Pytorch to approximate the model predictive controller under a scaled Mean Squared Error loss. We use Algorithm~\ref{alg:cap} with $D_a=16$ and $D_s=1$ to provide over-approximations for the reachable sets of the vehicle model~\eqref{eq:vehicle}, comparing functions $E^{\mathrm{G}}$, $E^{\mathrm{L}}$, and $E^{\mathrm{H}}$ from Theorem~\ref{thm:main}. Algorithm~\ref{alg:cap} line 7 is computed using auto\_LiRPA~\citep{xu2020automatic}. The results are shown in Figure \ref{fig:v_main_experiment}.




\begin{figure}[ht]
    \centering
    \includegraphics[width=\linewidth]{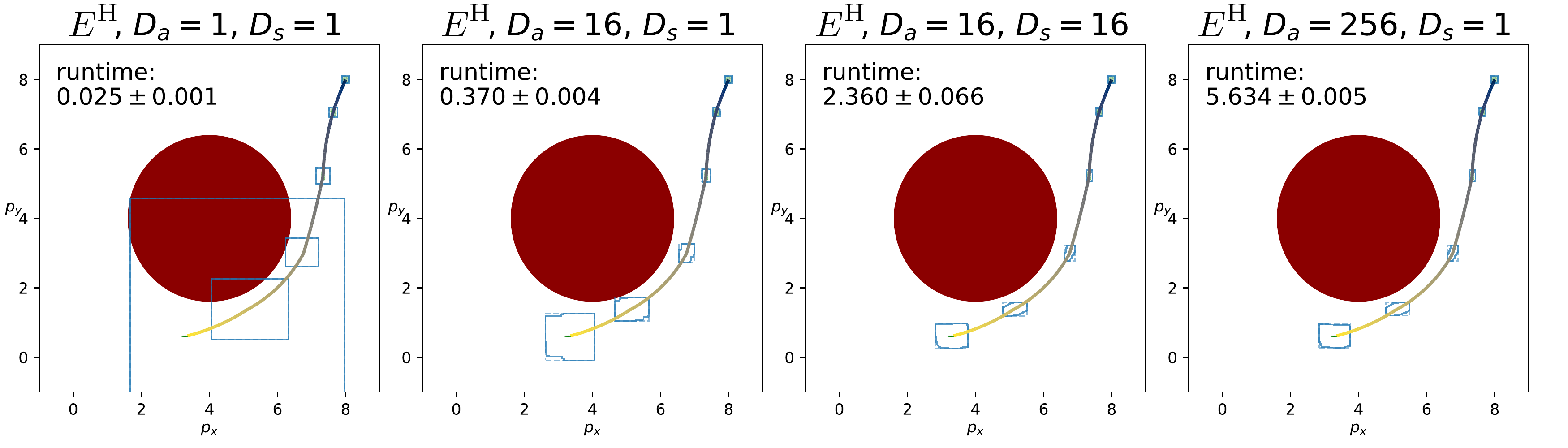}
    \vspace{-0.7cm}
    \caption{Performance of the Algorithm~\eqref{alg:cap} with $s={\mathrm{H}}$ for different partitions $D_s$ and sub-partitions $D_a$ for over-approximation of the reachable set of the system~\eqref{eq:vehicle} with neural network $u = \ON(x)$ trained to approximate an offline model predictive controller. All four figures show the $p_x-p_y$ plot of the motion of the vehicle. The runtimes are averaged over $10$ instances and the mean and standard deviation are reported.}
    \label{fig:v_partitioning_experiment}
    \vspace{-0.7cm}
\end{figure}


\subsection{Linear 6D quadrotor model}

We use the linear 6D quadrotor model adopted from~\citep{RI-JW-RA-GJP-IL:19} with the dynamics $\dot{x} = A x + B u + c $ where $A,B,c,u$ are as define in~\citep{ME-GH-CS-JPH:21}  
and $x=[p_x,p_y,p_z,v_x,x_y,v_z]^{\top}$ is the state of the system with $p_x,p_y$, and $p_y$ being the linear displacement in the $x,y$, and $z$ directions, respectively and $v_x,v_y$, and $v_z$ the velocity in the $x,y$, and $z$ directions respectively. The neural network controller $u=\ON(x)$ consists of $2$ layers with $32$ neurons in each layer and is identical to~\cite[Section H]{ME-GH-CS-JPH:21}. We use Theorem~\ref{thm:main} with the embedding function $E^{H}$ to over-approximate the reachable sets of the closed-loop system. The results are shown in Figure~\ref{fig:JonHow-comparison}. 


\begin{figure}[ht]
    \centering
    \includegraphics[width=\linewidth]{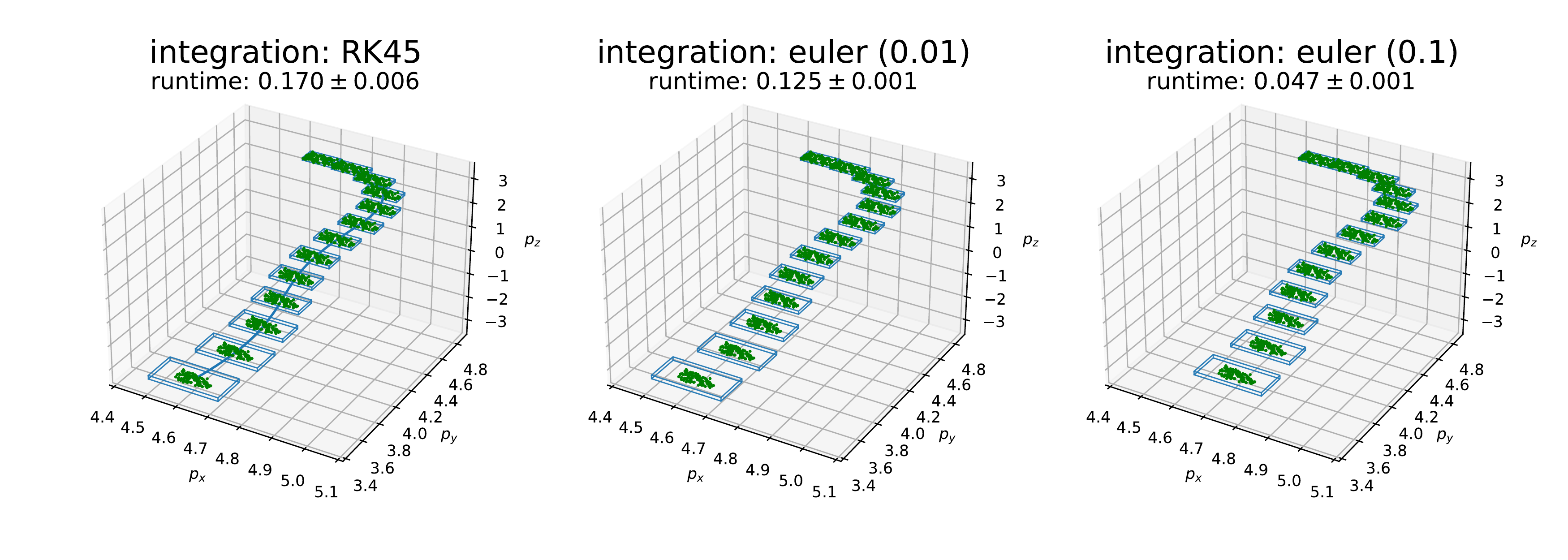}
    \vspace{-0.9cm}
    \caption{The time evolution of the reachable sets of the 6D quadrotor with the neural network controller $u=\ON(x)$ is shown in $p_x,p_y,p_z$ coordinate starting from the initial set $[4.65, 4.75]^2\times [2.95, 3.05]\times [0.94,0.96]\times [-0.01,0.01]^2$. The blue hyper-rectangles are the over-approximations of the reachable sets of the system computed using the function $E^{\mathrm{Lin}}$ in Theorem~\ref{eq:linear}. \textbf{Left plot}: The ODEs are integrated using a Runge-Kutta method. 
    \textbf{Middle plot}: The ODEs are integrated using forward Euler method with time step $0.01$.  
    \textbf{Right plot}: The ODEs are integrated using forward Euler method with time step $0.1$ to match the actuation time step. Note that the reachable set estimates of this plot are very similar to those in~\cite[Figure 10(a)]{ME-GH-CS-JPH:21}, which takes $0.113\pm 0.004$ seconds on the same computer. The runtimes are averaged over $10$ instances and mean and standard deviation are reported.}  
    
    \label{fig:JonHow-comparison}
    \vspace{-0.7cm}
\end{figure}

\section{Conclusions}

We presented a fast and scalable method, based on mixed monotone theory, to over-approximate the reachable sets of nonlinear systems coupled with neural network controllers. We introduce three methods to intertwine neural network inclusion functions from CROWN~\citep{HZ-etal:18} and open-loop decomposition functions with varying empirical results. For future research, we plan to study the role of the neural network verification algorithms in the tightness of our approximations. 

\acks{This work was supported in part by Cisco Systems and the National Science Foundation under awards \#1931980 and \#2219755.}

\bibliography{SJ,Main,FB,alias,fullalias}

\end{document}